\DeclareMathOperator*{\argmin}{arg\,min}
\newcommand\abs[1]{\left\lvert#1\right\rvert}
\newtheorem{lemma}{Lemma}
\newtheorem{remark}{Remark}
\begin{document}

\title{
Theoretical and Experimental Assessment of Large Beam Codebook at mmWave Devices: How Much is Enough?
\thanks{

Bora Bozkurt, Hasan Atalay Günel, Ali Görçin and İbrahim Hökelek are with the Communications and Signal Processing Research (HİSAR) Lab., T{\"{U}}B{\.{I}}TAK B{\.{I}}LGEM, Kocaeli 41470, Turkey. They are also with the Department of Electronics and Telecommunications Engineering
Istanbul Technical University, Istanbul 34467, Turkey (e-mails:\{bora.bozkurt, hasan.gunel, ali.gorcin, ibrahim.hokelek\}@tubitak.gov.tr).

Mohaned Chraiti is with the Department of Electronics Engineering, Sabanci University, {\.{I}}stanbul, Turkey (e-mail: mohaned.chraiti@sabanciuniv.edu)

Ali Ghrayeb is with College of Science and Engineering, Hamad Bin Khalifa University, Doha, Qatar (e-mail: aghrayeb@hbku.edu.qa)
}
}

\author{\IEEEauthorblockN{Bora Bozkurt, Hasan Atalay Günel, Mohaned Chraiti, İbrahim Hökelek, Ali Görçin, and Ali Ghrayeb}




}

\maketitle




\begin{abstract}
Modern millimeter wave (mmWave) transceivers come with a large number of antennas, each of which can support thousands of phase shifter configurations. This capability enables beam sweeping with fine angular resolution, but results in large codebook sizes that can span more than six orders of magnitude. On the other hand, the mobility of user terminals and their randomly changing orientations require constantly adjusting the beam direction. A key focus of recent research has been on the design of beam sweeping codebooks that balance a trade-off between the achievable gain and the beam search time, governed by the codebook size. In this paper, we investigate the extent to which a large codebook can be reduced to fewer steering vectors while covering the entire angular space and maintaining performance close to the maximum array gain. We derive a closed-form expression for the angular coverage range of a steering vector, subject to maintaining a gain loss within \(\gamma\) dB (e.g., 2\, dB) with respect to the maximum gain achieved by an infinitely large codebook. We demonstrate, both theoretically and experimentally, that a large beam-steering codebooks (such as the \(1024^{16}\) set considered in our experiment) can be reduced to just a few steering vectors. This framework serves as a proof that only a few steering vectors are sufficient to achieve near-maximum gain, challenging the common belief that a large codebook with fine angular resolution is essential to fully reap the benefits of an antenna array.

\end{abstract}

\begin{IEEEkeywords}
Beam sweeping, codebook refinement, experimental validation, and mmWave.
\end{IEEEkeywords}

\section{Introduction}
\subsection{Motivations}
The increasing demand for gigabit-per-second wireless links and massive connectivity drives the wireless industry and academic exploration into millimeter-wave (mmWave) and higher frequency bands, which offer substantial gigahertz-level bandwidth. Transmission over mmWave bands, however, comes at the expense of sensitivity to blockage and severe path loss. Therefore, deploying large antenna arrays at the users' terminals is deemed necessary to achieve high gains \cite{milimeter2018hemadeh}. Considerable effort has been directed towards innovative beam steering techniques, focusing on reducing the time required to search for optimal beams without compromising the gains from antenna arrays \cite{statisticsaided2018lin, statistical2017khormuji, datadriven2022ozkoc,hierarchical2016xiao, multi2017noh, enhanced2018xiao}.

\subsection{Related Work}
The high cost and power consumption of mmWave radio-frequency (RF) chains favor transceiver designs with a small number of RF chains and large antenna arrays \cite{milimeter2018hemadeh, beamforming2016kutty}. Consequently, analog beamforming (referred to here as beam-steering) emerged as a viable approach, in which the beam is steered by adjusting phase shifters, utilizing as few as one RF chain \cite{spatially2014ayach, channel2014alkhateeb, statistically2020shaban}. 

Numerous efforts have been made to develop innovative approaches to devise the phase shifter configurations and maximize the array gain while minimizing the beam search time and overhead. These methods can be classified into two categories: channel state information (CSI)-based beam-steering \cite{spatially2014ayach,channel2014alkhateeb} and codebook-based beam-steering \cite{statistical2017khormuji, statisticsaided2018lin, datadriven2022ozkoc,hierarchical2016xiao, multi2017noh, enhanced2018xiao}.

\subsubsection{CSI-based beam-steering}
The CSI-based approach resembles its counterpart in digital beamforming, as it requires estimating the CSI for each antenna. The key distinction, however, lies in selecting the antenna configuration from a discrete set of potential phase-shifter configurations. Under the assumption of perfect CSI knowledge, the authors in \cite{spatially2014ayach,channel2014alkhateeb} showed that near-maximum array gain can be achieved. 

In \cite{spatially2014ayach}, the authors formulate the beam search as a sparse reconstruction problem. They introduce a low-complexity algorithm for large antenna arrays, drawing inspiration from the well-known orthogonal matching pursuit technique. The work in \cite{channel2014alkhateeb} relies on the sparsity of the mmWave channel matrix to develop a low-complexity channel estimation algorithm for beamforming purposes. Although these techniques demonstrate optimal performance, they rely on the assumption of having CSI knowledge for each antenna array's element. However, factors such as short channel coherence time (resulting in rapid channel variation), high path loss (weak channel before beamforming), and deployment of a large number of antennas require high pilot symbol rates to obtain viable CSI estimates \cite{milimeter2018hemadeh, beamforming2016kutty}. For instance, transmitting over the 30\,GHz bands reduces the channel coherence time by a factor of 15 compared to transmitting over the 2\,GHz bands, suggesting an increase in overhead of at least 15 times.

\subsubsection{Codebook based beam-steering}
To eliminate the need for CSI knowledge, beam sweeping emerges as a viable alternative, where a beam is sequentially steered in different directions based on a predefined codebook. The direction that maximizes the array gain is then selected. This mechanism is supported by the 3GPP standard \cite{standard}. However, modern mmWave systems feature large antenna arrays and a high number of phase shifter configurations, enabling fine angular resolution for beam sweeping. These result in large codebooks, making exhaustive search time-inefficient. With a common belief that achieving near-optimal gain requires sweeping with small steps, various codebook designs have been proposed to reduce the angular search space by leveraging side information and intelligent algorithms \cite{fast2018filippini, mmwave2019wang, statisticsaided2018lin, statistical2017khormuji, datadriven2022ozkoc, beamcodebook2019mo, codebook2022mabrouki,hierarchical2016xiao, multi2017noh, enhanced2018xiao}.

\par Codebook design techniques assisted by statistical data have been put forward \cite{fast2018filippini, mmwave2019wang, statisticsaided2018lin, statistical2017khormuji, datadriven2022ozkoc}. The authors in\cite{statisticsaided2018lin} suggest collecting statistics about the Angle-of-Departure (AoD), assuming that certain angles are more probable than others. They utilize the Vector Quantization technique to devise a set of the most probable AoDs that represent the entire set. The work in \cite{statistical2017khormuji} considers the scenario of multiple-antenna base station (BS) and single-antenna users. In line with the approach in \cite{statisticsaided2018lin}, the authors assume that the AoDs are not uniformly distributed at the BS. They subsequently develop an AoD quantization technique and a non-uniform codebook, in which certain steering vectors are more probable than others. The authors propose allocating power according to the probability of AoD to maximize coverage. The work in \cite{datadriven2022ozkoc} investigates the potential construction of a data-driven codebook that maximizes coverage probability, considering a multiple-antenna BS, single-antenna receivers, and a predetermined codebook size. The algorithm iteratively adds new beam-steering codewords (phase shifters' configurations) to incrementally accommodate a new set of users until the pre-determined codebook size is reached.

Leveraging side information, such as AoD distribution statistics and location, can reduce the angular search space and consequently the codebook size. In addition to the fact that these approaches are designed for beam steering at the BS, experimental results demonstrate that they are eventually more suitable to be applied at the BS side, where statistics are collected based on user distribution \cite{Jchraiti1,Cchraiti}. Conversely, on the user device side, its random orientation results in nearly equiprobable DoAs over time, making the assumption less reliable. Moreover, even if certain directions with respect to a predefined reference direction are more probable in a given environment, an orientation estimate of the device is necessary for beam alignment. For example, in the case of line-of-sight scenario and where the user’s location is known, an orientation estimate of the user device is essential for determining the beam direction. Fingerprinting technique is another example of statistics-based techniques that uses location and orientation as side information. Despite their time search efficiency, experimental results indicate a significant gap between the achievable gain and maximum possible gain, reaching up to 17 dB \cite{Cchraiti}. 

Without considering any assumption on the distribution of AoD, hierarchical codebooks cover the entire angular space by intelligently narrowing down the search space\cite{hierarchical2016xiao, multi2017noh}. These schemes employ a multi-layered codebook to progressively refine beamwidth through successive stages. Hierarchical methods employ a binary search to find the optimal beam direction, starting with low-resolution (coarse) beams and advancing to high-resolution (narrow) beams. Although the hierarchical codebook schemes offer time-efficient solutions, their efficacy is hindered by the high probability of selecting the direction pointing to a non-dominant multi-path component, resulting in a high gain gap from the maximum \cite{hierarchical2016xiao,Cchraiti}
\subsection{Problem Statement} 
Statistics-based codebooks reduce the search space and provide a promising approach to overcome the need for CSI in beamforming and have demonstrated superior performance over hierarchical techniques in terms of array gain. 
Under the assumption of certain directions being more likely, results indicate their efficiency when implemented at the base station (BS), enhancing coverage and aggregated transmission rates. Such an assumption holds true at BSs, as they are static, but may not be valid for mobile user terminals due to their dynamically changing orientations. With the user terminal capable of covering the entire angular range, a key question arises: What should be the minimum codebook size (beam sweeping step) that ensures full angular coverage without significantly compromising the achievable array gain? We aim to answer this question through a theoretical framework and experimental validation. 
\subsection{Contributions} 
We consider the problem of beam sweeping at user terminals with random orientations. Without any assumption regarding the DoA distribution or orientation estimate, we examine whether large codebooks can be reduced to a few elements while maintaining the array gain within \(\gamma\) dB of the maximum, i.e., achieving near-optimal performance with only a few steering vectors. We project that a steering vector pointing a beam at a given angle can cover neighboring angles while maintaining an array gain within the $\gamma$\,dB margin from the maximum. We theoretically quantify the extent to which a steering vector can cover adjacent angles, showing that this coverage depends on both the angles and the array size. Building on the analytical results, we develop an algorithm that identifies the set of steering vector candidates, forming a refined codebook. Theoretical and experimental results show that it is possible to reduce large size codebook, to a few steering vector. For instance, experimental and theoretical results indicate that it is possible to maintain the gap within 3 dB of the maximum gain while reducing the codebook size from $1024^{16}$ to fewer than ten elements for the case of an Uniform Linear Array.

Our contributions can be summarized as follows.
\begin{itemize}
    \item We provide a framework to quantify the angular coverage of steering vector as a function of a constrained gain loss.
    \item We derive the refined codebook size and beam steering directions that ensure a gain loss not exceeding \(\gamma\) dB compared to the maximum gain obtained using an infinitely large codebook with fine angular resolution.
    \item We provide intricate insights into codebook reduction, revealing that the sweeping step should not be uniform in all directions. Moreover, contrary to the common belief that achieving near maximum gain requires small steps, our findings show that a small-sized codebook with coarse steps is enough to achieve near optimal performance. Beyond a certain size, the gain becomes negligible.
    \item We validate the proposed framework through experimental results using a multi-antenna mobile receiver, with an initial codebook size spanning six orders of magnitude.
\end{itemize}

This work provides insights into beam sweeping at the user device. When a small number of steering vectors are sufficient to cover the entire angular space, there will be no need for side information to reduce the search space. Furthermore, if side information is available to constrain the search space, the proposed approach can be utilized to perform beam sweeping with coarse steps, further reducing the search time.

The  paper is organized as follows. Sec. II presents the system model. Sec. III provides the theoretical derivation of the angular coverage ranges of a steering vector. Sec. IV provides details on codebook refinement algorithm. Sec. V details the experimental results, including all pertinent information the measurement results in depth. Finally, Sec. VI offers a summary of the main findings and the implications of this study.

\vspace{-0.3cm}

\section{System Model}\label{sec:sysmodel} 
This work is supported by a real-world experiment. Consequently, the system model described in this section is an abstraction of the experimental setup detailed in Sec. \ref{sec:expSetup}. We consider the scenario of a mobile and randomly oriented receiver equipped with $N$ antennas. Each antenna in the array is controlled by an \( M \)-bit phase shifter, enabling \( 2^M \) possible settings per antenna. The set of possible steering vectors forms an initial codebook (denoted by $\zeta$) of size \( (2^M)^N \). 

The DoA varies as the terminal's orientation changes or as objects in the environment move. Communications occur in the mmWave band, known for its limited scattering, with beam sweeping primarily aimed at tracking the direction of the strongest arriving ray. In a given instance, the DoA of a ray is defined as the angle between the array's boresight (normal to the array's midpoint) and the wavefront normal. In the presence of multiple incident rays, whether in line-of-sight or non-line-of-sight conditions, beam sweeping sequentially steers beams from a predefined codebook and selects the direction that maximizes array gain, i.e., ideally pointing in the direction of the strongest ray. As shown in Fig. \ref{fig:beamsweeping}, a larger codebook (i.e., smaller step size) increases the likelihood of aligning the DoA of the strongest ray more closely with the central direction of a swept beam, thereby enhancing array gain.

The study encompasses two types of phased arrays: the Uniform Linear Array (ULA) and the Uniform Rectangular Array (URA). For a ULA, an incident ray is characterized by the azimuthal DoA \(\theta\). For the URA, the incident ray is defined by the azimuth-elevation angle pair \((\theta_1, \theta_2)\). For clarity and simplicity, we present the mathematical expressions and relevant studies on URA in Sec. \ref{DevFromMaxGainForURA}.

In the case of an ULA, the array manifold vector \(\mathbf{a}(\theta)\) corresponding to the strongest ray arriving from a DoA \(\theta\) is given by:
\begin{equation}\label{eq:1}
\mathbf{a}(\theta) = \beta[1, e^{j \frac{2 \pi}{\lambda} d \sin (\theta)}, \ldots, e^{j(N-1) \frac{2 \pi}{\lambda} d \sin (\theta)}]^{\mathrm{T}},
\end{equation}
where $d$ is the antenna inter-distance and $\lambda$ is the wavelength. Here, $\beta$ represent the path loss effect. 

In analog beam steering, the phase-shifters are adjusted to steer the signal in a direction that maximizes the array gain.
With an \( M \)-bit phase shifter, each antenna phase \( \phi_n \) (\( n \in 1, \dots, N \)) can take values from the discrete set \( \mathcal{\phi} = \left\{ 0, \frac{2\pi}{2^M}, \ldots, \frac{(2^M - 1)2\pi}{2^M} \right\} \). To achieve the maximum array gain, the elements of the steering vector \( [\mathbf{w}]_n =\frac{1}{\sqrt{N}} e^{\mathrm{j} \phi_n} \) should ideally match, or approximate closely, the conjugate of the corresponding elements in the array manifold vector. That is:
\[
[\mathbf{w}]_n \propto [\mathbf{a}(\theta)]^{*}_n,  \quad \text{where} \quad \phi_n \simeq \arg([\mathbf{a}(\theta)]^{*}_n)
\]
In the context of beam sweeping, large codebooks with fine granular angular resolution guarantee that one of the steering vectors closely aligns with the optimal phase profile of the strong ray, thereby achieving the maximum gain.

\begin{figure}[t]
\includegraphics[width = 1\linewidth]{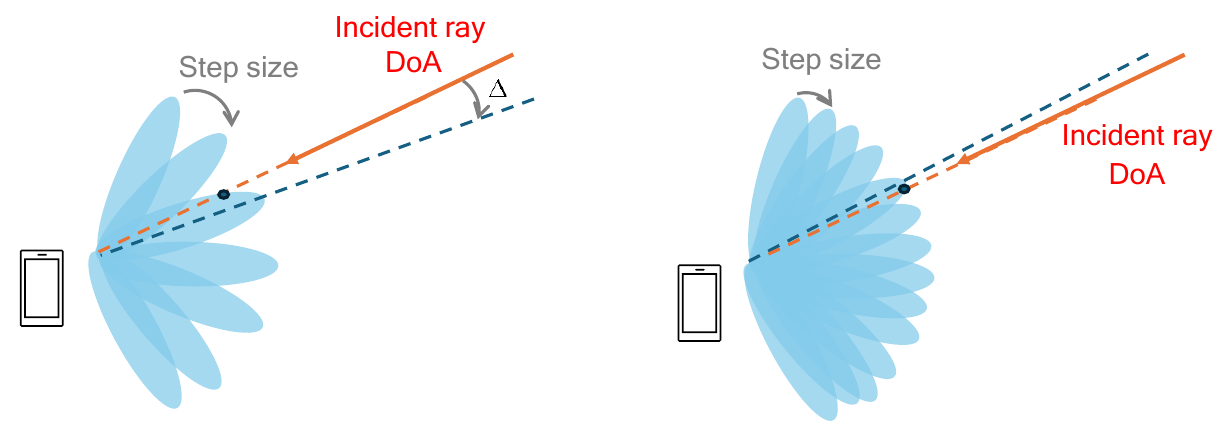}
\caption{Illustration of the effect of the beam sweeping step on the array gain.} \vspace{-0.3cm}
\centering
\label{fig:beamsweeping}
\end{figure}

As the number of antenna elements and phase configurations increases, the beam sweeping step is refined to achieve the maximum array gain. However, the codebook size, \((2^M)^N\), increases exponentially, leading to longer beam search times.\footnote{It is possible to find the phase-shifter configurations through channel estimation. However, this approach has multiple drawbacks, including high overhead, weak channel estimates before beamforming, and the requirement for channel estimation per antenna, making codebook-based beam steering a favorable option.} For example, in the considered experimental setup, a receiver with 16 antennas and 10-bit phase shifters results in a total of \( 1024^{16} \) possible steering vectors.

In light of the above discussion, we explore reducing the codebook size while covering the entire angular space with minimal loss of optimality, i.e., keeping the gain within \(\gamma\) dB of the maximum. We project that a steering vector pointing to a specific direction can also cover slightly deviated angles \(\theta + \Delta\), where \(\Delta\) is small. The notion of "small deviation" is defined with respect to the targeted array gain loss threshold $\gamma$. This analysis will enable us to determine the minimum number of steering vectors needed for full angular coverage while maintaining the array gain gap to maximum gain above the threshold \(\gamma\) dB.

\section{Angular coverage of a steering vector}
\subsection{The case of ULA}
Consider an ULA with a strong ray arriving from a DoA equal to \(\theta + \Delta\). Using the expression in \eqref{eq:1}, the manifold vector $\mathbf{a}(\theta+\Delta)$ is given by:
\begin{equation}\label{eq:arraymanifoldULA}
\mathbf{a}(\theta+\Delta) = \beta[1, e^{j \frac{2 \pi}{\lambda} d \sin (\theta+\Delta)}, \ldots, e^{j(N-1) \frac{2 \pi}{\lambda} d \sin (\theta+\Delta)}]^{\mathrm{T}}.
\end{equation} For a beam-steering vector $$\mathbf{w}=\frac{1}{\sqrt{N}}[e^{-j\phi_1}, e^{-j\phi_2},..., e^{-j\phi_N}],$$ the antenna array gain can be written as
\begin{equation}
    G(\mathbf{w},\theta+\Delta)= \abs{\mathbf{a}(\theta+\Delta)\cdot\mathbf{w}}^2 = \frac{\abs{\beta}^2}{N}\abs{\sum_{n=1}^{N}a_n(\theta)e^{-j\phi_n}}^2. 
\end{equation}
 Let \(\mathbf{w}_{\theta+\Delta}\) be the steering vector in \(\mathcal{C}\) that maximizes \(G(\mathbf{w}, \theta+\Delta)\) defined as:
  \begin{equation}
  \mathbf{w}_{\theta+\Delta} = \arg \max_{\mathbf{w} \in \mathcal{C}} G(\mathbf{w}, \theta+\Delta).
  \end{equation} 
  The latter Unit-Modulus Least Squares problem can be solved using techniques such as Semi-Definite Relaxation and associated algorithms \cite{fast2017tranter}, which are suitable for both moderate and large values of \(M\) (the phase shifter configuration set size per antenna.) Nonetheless, the maximum array gain can be achieved through infinity large codebook ($M$ is large) as in this work. The problem simplifies and the optimal solution involves setting each antenna phase shifter to 
  \begin{equation}\label{eq:ULAOptimalW}
  \begin{aligned}
  \arg([\mathbf{w}_{\theta+\Delta}]_n) &\approx \arg(a_n(\theta+\Delta)^*)\\
  &=-j(n-1) \frac{2 \pi}{\lambda} d \sin (\theta+\Delta),
  \end{aligned}\end{equation}
  
  yielding a maximum array gain equal to
  \begin{equation}
  G(\mathbf{w}_{\theta+\Delta}, \theta+\Delta) \approx N \beta^2. 
  \end{equation}
  Note that $G(\mathbf{w}_{\theta+\Delta},\theta+\Delta) \leq N \beta^2$ for small codebook size, with coarse step size.

Considering the DoA \( (\theta + \Delta) \) of the incident wave and applying the steering vector \[ \mathbf{w}_{\theta}=\frac{1}{N}[1, e^{-j \frac{2 \pi}{\lambda} d \sin (\theta)}, \ldots, e^{-j(N-1) \frac{2 \pi}{\lambda} d \sin (\theta)}] \] pointing toward a slightly different direction \( \theta \), the ratio of the achievable array gain to the maximum array gain is given by:

\begin{subequations}\label{eq:ULAdev}
\begin{align}
    D_\theta(\Delta) &= \frac{G(\mathbf{w}_\theta, \theta + \Delta)}{G(\mathbf{w}_{\theta+\Delta}, \theta+\Delta)}\\&= \frac{\abs{ \mathbf{a}(\theta+\Delta)\cdot\mathbf{w}_{\theta}}^2}{\abs{\mathbf{a}(\theta+\Delta)\cdot\mathbf{w}_{\theta+\Delta}}^2}\\& \geq \frac{\abs{\mathbf{a}(\theta+\Delta)\cdot\mathbf{w}_{\theta}}^2}{N\beta^2}.\label{eq:suba} 
    \end{align}
\end{subequations}
The inequality in \eqref{eq:suba} stems from the fact that $G(\mathbf{w}_{\theta+\Delta},\theta+\Delta) \leq N \beta^2$.
The lower bound becomes approximately an equality in the case when $M$ is large.
By incorporating the expressions for $\mathbf{a}(\theta+\Delta)$ and $\mathbf{w}_{\theta}$, given in \eqref{eq:arraymanifoldULA} and \eqref{eq:ULAOptimalW}, respectively, we obtain
\begin{equation}
\begin{aligned}
    D_\theta(\Delta) &\geq 
    \frac{\frac{1}{N}\abs{\sum_{n=1}^{N}e^{j(n-1) \frac{2 \pi}{\lambda} d (\sin (\theta + \Delta) - \sin(\theta))}}^2}{N}\\
    &=\frac{1}{N^2}\abs{\sum_{n=1}^{N}e^{j(n-1) \frac{2 \pi}{\lambda} d (\sin (\theta + \Delta) - \sin(\theta))}}^2.
    \end{aligned}
\end{equation}
$D_\theta(\Delta)$ is equal to one, if and only if $\Delta=0$, suggesting that the achievable gain and the maximum gain are equal only in the case of perfect alignment between the indcident wave DoA and the direction of the beam. Using the geometric series related formula, the lower bound becomes
\begin{equation}
\begin{aligned}
    D_\theta(\Delta)&\geq \frac{1}{N^2}\abs{\frac{1-e^{jN\frac{2\pi}{\lambda}d(\sin(\theta+\Delta) - \sin(\theta))}}{1-e^{j\frac{2\pi}{\lambda}d(\sin(\theta+\Delta) - \sin(\theta))}}}^2 \\&
    = \frac{1}{N^2}\frac{1-\cos\left[N\frac{2\pi}{\lambda}d(\sin(\theta+\Delta)-\sin(\theta))\right]}{1-\cos\left[\frac{2\pi}{\lambda}d(\sin\theta+\Delta)-\sin(\theta))\right]}.
    \end{aligned}
\end{equation}
Let \( z = \frac{2\pi}{\lambda} d (\sin(\theta+\Delta) - \sin(\theta)) \) represents the phase deviation due to the misalignment between the actual DoA \( \theta + \Delta \) and the steering direction \( \theta \). Define the degradation factor threshold as \( \gamma_{\mathrm{f}} = 10^{\gamma/10} \), where \( \gamma \) represents the allowable degradation in dB, i.e., a \(\gamma\)-dB gap between the achievable and maximum array gain. The alignment condition $D_{\theta}(\Delta)\geq \frac{1}{\gamma_f}$ ensuring that the array gain loss does not exceed this threshold is achieved when:

\[\frac{1}{N^2}\frac{1-\cos\left(Nz\right)}{1-\cos\left(z\right)} \geq \frac{1}{\gamma_{\mathrm{f}}},\]
which gives
\begin{equation}\label{eq:inequalConj}
\begin{aligned}
     \cos(z) -\frac{\gamma_{\mathrm{f}}}{N^2}(\cos(Nz)-1) - 1  \geq 0    .
\end{aligned}
\end{equation}
Solving (\ref{eq:inequalConj}) yields a set of values for \( z \) that will help defining the coverage range of a steering vector. The solution set is provided in the following lemma.

\begin{lemma}\label{lem1}
Considering moderately large number of antennas $N$ (e.g., higher than six), there exists a constant \(\alpha^{*}\) such that the following inequality
\begin{equation}\label{eq:ineLem}
\cos(z) - \frac{\gamma}{N^2} (\cos(Nz) - 1) - 1  \geq 0
\end{equation}
holds for
$$
z \in \left[-\frac{\alpha^{*}}{N}, \frac{\alpha^{*}}{N}\right].
$$
$\alpha^{*}$ is the smallest strictly positive root of $$1-\cos(\alpha) - \frac{\alpha^2}{2\gamma}=0.$$
\end{lemma}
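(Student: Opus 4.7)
The plan is to convert the mixed-frequency inequality \eqref{eq:ineLem} into a single-variable problem via the substitution $\alpha = Nz$ and then extract a scalar function whose first positive root gives the coverage boundary. First, I would rewrite \eqref{eq:ineLem} in the equivalent form
\[
1 - \cos(z) \;\leq\; \frac{\gamma}{N^{2}}\bigl(1 - \cos(Nz)\bigr),
\]
which exhibits it as a bound of a slowly varying loss term by a scaled fast-oscillating gain term. Setting $z = \alpha/N$, the statement to be proved becomes $1 - \cos(\alpha/N) \leq (\gamma/N^{2})(1 - \cos\alpha)$ for $\alpha \in [0, \alpha^{*}]$, after which the evenness of both sides in $z$ immediately extends the claim to negative $z$.

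Second, I would dispose of the $N$-dependent left-hand side via the elementary identity $1 - \cos(x) = 2\sin^{2}(x/2) \leq x^{2}/2$, applied with $x = \alpha/N$. This yields $1 - \cos(\alpha/N) \leq \alpha^{2}/(2N^{2})$, so it suffices to verify the $N$-independent inequality
\[
\frac{\alpha^{2}}{2N^{2}} \;\leq\; \frac{\gamma}{N^{2}}\bigl(1 - \cos\alpha\bigr),
\]
which is equivalent to $f(\alpha) := 1 - \cos\alpha - \alpha^{2}/(2\gamma) \geq 0$. Under the hypothesis of moderately large $N$, the upper bound $x^{2}/2$ loses only a term of order $\alpha^{4}/N^{4}$, so the reduction is essentially lossless and returns the intended characterization of $\alpha^{*}$.

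Third, I would analyze $f$ by Taylor expansion around the origin: $f(0) = 0$, $f'(0) = 0$, and $f''(0) = 1 - 1/\gamma$, which is strictly positive under the natural assumption $\gamma > 1$ (corresponding to a strictly positive dB loss budget). Hence $f$ is locally strictly convex at the origin and strictly positive on some right neighborhood of it. Because $\alpha^{2}/(2\gamma)$ grows without bound while $1 - \cos\alpha$ is bounded above by $2$, the function $f$ eventually becomes negative, so a smallest strictly positive root $\alpha^{*}$ exists, and by the intermediate value theorem $f \geq 0$ on $[0, \alpha^{*}]$. Chaining back through the substitution and invoking symmetry then yields the conclusion on $[-\alpha^{*}/N, \alpha^{*}/N]$.

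The main obstacle I anticipate is quantifying the gap between the sharp inequality \eqref{eq:ineLem} and the reduced scalar condition $f(\alpha) \geq 0$. The elementary bound $1 - \cos(\alpha/N) \leq \alpha^{2}/(2N^{2})$ discards a positive quantity, so in principle the true coverage angle at finite $N$ could be strictly larger than $\alpha^{*}/N$; conversely, the characterization via $\alpha^{*}$ is tight in the $N \to \infty$ limit, which justifies the ``moderately large $N$'' hypothesis. A sharper analysis for small $N$ would require retaining the leading correction $-\alpha^{4}/(24 N^{4})$ in the Taylor expansion, but within the regime considered by the paper the present reduction delivers the claimed bound without further bookkeeping.
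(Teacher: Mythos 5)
Your proof is correct, and it follows the same skeleton as the paper's—substitute $\alpha = Nz$, reduce \eqref{eq:ineLem} to the scalar condition $1-\cos(\alpha)-\frac{\alpha^2}{2\gamma}\geq 0$, and take $\alpha^{*}$ to be the smallest strictly positive root—but the step that justifies the reduction is genuinely different, and yours is logically tighter. The paper replaces $\frac{N^2}{\gamma}\bigl(1-\cos(\alpha/N)\bigr)$ by its large-$N$ limit $\frac{\alpha^2}{2\gamma}$ via L'H\^opital's rule, so its reduction is an approximation and the lemma inherits the ``moderately large $N$'' hypothesis together with a numerically assessed discrepancy (Tab.~\ref{tab:alphaApprox}). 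You instead invoke the exact bound $1-\cos(x)=2\sin^2(x/2)\leq x^2/2$ with $x=\alpha/N$, which turns $1-\cos(\alpha)-\frac{\alpha^2}{2\gamma}\geq 0$ into a rigorous \emph{sufficient} condition for \eqref{eq:ineLem} valid for every $N$; the price is one-sidedness, i.e., $[-\alpha^{*}/N,\alpha^{*}/N]$ is a conservative inner estimate of the true coverage range (consistent with Tab.~\ref{tab:numerVsAnalytic}, where the numerical range is slightly wider), whereas the paper's approximation is two-sided and nearly exact for large $N$. Your analysis of the scalar function is also leaner: $f(0)=f'(0)=0$ and $f''(0)=1-1/\gamma>0$ give strict positivity on a right neighborhood of the origin, unboundedness of $\alpha^2/(2\gamma)$ gives existence of a first positive root, and minimality of $\alpha^{*}$ gives $f\geq 0$ on $[0,\alpha^{*}]$—avoiding the paper's full enumeration of roots, sign changes, and the union-of-intervals discussion, which is only needed to show that $[-\alpha_1,\alpha_1]$ is one component of a possibly larger solution set. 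One small caveat: the constant appearing in the lemma should be read as $\gamma_{\mathrm f}=10^{\gamma/10}>1$ rather than $\gamma$ (the paper itself conflates the two symbols); your hypothesis $\gamma>1$ is the correct reading and matches the paper's use of $\gamma_{\mathrm f}>1$ when checking the second derivative at the origin.
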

\begin{proof}
Consider the inequality
\begin{equation} \label{eq:mainEq}
  \cos(z) - \frac{\gamma}{N^2} (\cos(Nz) - 1) - 1 \geq 0. 
\end{equation}
Let us denote by \(\alpha\) the product \(zN\). The inequality becomes
$$
\cos\left(\frac{\alpha}{N}\right) - \frac{\gamma}{N^2} (\cos(\alpha) - 1) - 1 \geq 0,
$$
implying
\begin{equation}\label{eq:ineq}
-\overbrace{\frac{\frac{N^2}{\gamma}(1-\cos(\alpha/N))}{N^2/\gamma}}^{\textcircled{1}} - \frac{\gamma}{N^2} (\cos(\alpha) - 1) \geq 0.
\end{equation}
Incorporating L'Hôpital's rule for high and moderate values of \(N\) to the numerator of the term in \(\textcircled{1}\) gives
$$
\frac{N^2}{\gamma}(1 - \cos(\alpha/N)) \simeq \frac{\alpha^2}{2\gamma}.
$$
Consequently, the inequality in \eqref{eq:ineq} becomes
\begin{equation}
\frac{\gamma}{N^2}(1 - \cos(\alpha) - \frac{\alpha^2}{2\gamma}) \geq 0.
\end{equation}
Since \(\frac{\gamma}{N^2} > 0\), the equality simplifies to
\begin{equation}\label{ineqGeq0}
1 - \cos(\alpha) - \frac{\alpha^2}{2\gamma} \geq 0. 
\end{equation}
The roots are the points where the function crosses zero or is at least tangent to zero. To determine the ranges where the inequality holds, we begin by finding the roots of the equation:
\begin{equation}
1 - \cos(\alpha) - \frac{\alpha^2}{2\gamma} = 0.
\end{equation}
Since this equation involves only even functions, if \( \alpha \) is a root, then \( -\alpha \) must also be a root. The fact that zero is a root implies that the set of roots contains an odd number of elements, which has the following form:
\begin{equation}
\{-\alpha_K, \dots, -\alpha_1, 0, \alpha_1, \dots, \alpha_K\},
\end{equation}
where \( \alpha_K>\cdots>\alpha_2>\alpha_1 \).

For \( \alpha = 0 \), the first derivative:
\begin{equation}
\sin(\alpha) - \frac{\alpha}{\gamma_{\mathrm{f}}}
\end{equation}
is equal to zero, while the second derivative:
\begin{equation}
\cos(\alpha) - \frac{1}{\gamma_{\mathrm{f}}}=1-\frac{1}{\gamma_{\mathrm{f}}}
\end{equation}
is strictly positive, given that \( \gamma_{\mathrm{f}} > 1 \). This suggests that \( \alpha = 0 \) is a local minima. The other roots do not exhibit the properties of local minima or maxima, as satisfying the necessary condition for being a local extremum involves solving the following system of equations:
\begin{equation}
\begin{cases}
1 - \cos(\alpha) - \frac{\alpha^2}{2\gamma} = 0 \\
\sin(\alpha) - \frac{\alpha}{2\gamma} = 0
\end{cases}
\end{equation}
This equation system yields to:
\begin{equation}
1 - \cos(\alpha) - 2\gamma \sin^2(\alpha) = 0.
\end{equation}
The latter equation approaches zero only as \( \alpha \) approaches zero. Thus, none of the roots, except \( \alpha = 0 \), correspond to local minima. Consequently, the function changes sign at each root. Collectively, these results suggest that the inequality in \eqref{ineqGeq0} holds when $N z$ is within the following union intervals:
\[\begin{aligned}
&[-\alpha_1, \alpha_1] \cup [\alpha_2, \alpha_3] \cup [-\alpha_3,-\alpha_2] \cup \cdots\\&\qquad\qquad\cup [\alpha_{2k}, \alpha_{2k+1}] \cup [-\alpha_{2k+1}, -\alpha_{2k}],
\end{aligned}
\]
where \( k = \lfloor \frac{K}{2} \rfloor \). It follows that the function is positive within the interval \( [-\alpha_1, \alpha_1] \), centered around zero, as a part of the union. Here, $\alpha_1$ is the smallest strictly positive root.
\begin{remark}
The number of roots of the function $1-\cos(\alpha) - \frac{\alpha^2}{2\gamma}=0$ depends on the value of \( \gamma_{\mathrm{f}} > 1 \), with a minimum of three roots, including \( \alpha = 0 \). The proof is provided in the following. First, \( \alpha = 0 \) represents a local minima, at which the function equals zero. These indicates that around \( \alpha = 0 \) the function rises. Furthermore, as \( \alpha \to \pm \infty \), the function tends toward \( -\infty \). This behavior implies that the function must cross zero at least at two additional points, which are symmetrically opposite due to the even nature of the function. Therefore, there exits at least two non-zero roots between which the inequality in $\eqref{eq:ineLem}$ is valid. These roots are denoted by $\alpha_1$ and $-\alpha_1$.
Second, the function may have additional roots if it continuously alternates between increasing and decreasing. Therefore, we analyze the first derivative. The roots of the first derivative ($\sin(\alpha) - \frac{\alpha}{\gamma_{\mathrm f}}$) set to zero correspond to intersections between a line with slope \( 1 / \gamma_{\mathrm{f}} \) and the curve \( \sin(\alpha) \). Considering the case when $\alpha>0$ (similar analysis applies for $\alpha<0$). In the interval \( [0, \pi] \), it is well known that any line, regardless of the slope, intersects with the sine function at maximum in one point. When \( \alpha\in[\pi, 2\pi] \), the sine function is negative, precluding intersections. For $\alpha>2\pi$, intersections between $\sin(\alpha)$ and $\alpha / \gamma_{\mathrm{f}}$ may occur again, if \( \alpha / \gamma_{\mathrm{f}} < 1 \), as the sine function's maximum is bounded by one. Hence, no intersect occurs if the function is continuously increasing or decreasing. In the range \( \alpha > 2\pi \), no intersection occurs when \( \alpha / \gamma_{\mathrm{f}} > 1 \) corresponding to the case when \( \gamma_{\mathrm{f}} < 2\pi \) and hence in case $\gamma\leq 10\log_{10}(2\pi)\simeq 8$\,dB. 

In conclusion, with a maximum allowed degradation $\gamma$ of 8 dB or less, we obtain exactly three roots, including zero. The solution set of the inequality in (\ref{eq:ineLem}) is hence of the form \( [-\alpha_1, \alpha_1] \). For degradation values exceeding 8 dB, additional roots may exist. Nonetheless, the lemma holds, as the function remains positive within \( [-\alpha_1, \alpha_1] \).\end{remark}
\end{proof}

In Lemma \ref{lem1}, we provide a bound on \( z \) (and hence on \( \sin(\theta + \Delta) - \sin(\theta) \)) within which the array gain degradation remains within a predefined threshold. This bound is given by the interval \(\left[-\frac{\alpha^{*}}{N}, \frac{\alpha^{*}}{N}\right]\), where \(\alpha^{*}\) can be determined numerically by solving \(1 - \cos(\alpha) - \frac{\alpha^2}{2\gamma} = 0\). While this interval is derived considering the assumption of large number of antennas at the user device, numerical results show that the analytical range is closely approximation with negligible discrepancy, as demonstrated in Tab. \ref{tab:alphaApprox} for various values of \(\gamma\) and antennas number $N$. The table shows that the difference in size between the numerical solution and the analytical derived range is negligible. 
\begin{table}[h!]
\centering
\caption{Discrepancy between the numerical (actual) and approximated values of the solution range pertains $\alpha$.}
\label{tab:alphaApprox}
\begin{tabular}{|c|c|c|c|c|}
\hline
\textbf{$\gamma_{\mathrm f}$} & $N \geq 10$ & $N \geq 20$ & $N \geq 50$ & $N \geq 100$\\
\hline
$2$ & 0.0121 & 0.0030 & 4.8054e-04 & 1.2011e-04\\
$3$ & 0.0135 & 0.0034 & 5.3849e-04 & 1.3460e-04\\
$4$ & 0.0139 & 0.0035 & 5.5446e-04 & 1.3859e-04\\
$5$ & 0.0140 & 0.0035 & 5.5693e-04 & 1.3921e-04\\
\hline
\end{tabular}
\end{table}

Using the results from Lem. \ref{lem1}, which provide bounds for \( z = \sin(\theta + \Delta) - \sin(\theta) \) in the form \( \left[-\frac{\alpha^{*}}{N}, \frac{\alpha^{*}}{N}\right] \) that maintain the loss within \( \gamma \), we can subsequently determine the corresponding range for \( \Delta \). Lem. \ref{lem1} suggests that the inequality (\ref{eq:inequalConj}) holds if, 
\begin{equation}\label{eq:sintheta}
    -\frac{\alpha^{*}}{N}  \leq z=\frac{2\pi}{\lambda}d(\sin(\theta + \Delta)-\sin(\theta)) \leq \frac{\alpha^{*}}{N},
\end{equation}
which implies
\begin{equation}\label{eq:boundDelta}
    -\frac{\lambda}{2\pi d N} \alpha^{*}+\sin(\theta) \leq \sin(\theta + \Delta)\leq \frac{\lambda}{2\pi dN}\alpha^{*}+\sin(\theta).
\end{equation}
For a given steering angle \(\theta\), the coverage angular range of the steering vector is hence bounded as follows:
\[
\mathrm{L}\Delta_{\theta} \leq \Delta \leq \mathrm{U}\Delta_{\theta},
\]
where 
\begin{align}\label{eq:ULAdeltaInterval}&\qquad\mathrm{L}\Delta_{\theta}=\arcsin\left(-\frac{\lambda}{2\pi d N} \alpha^{*}+\sin(\theta)\right)-\theta\\
    &\text{and}\nonumber\\
    &\qquad\mathrm{U}\Delta_{\theta}= \arcsin\left(\frac{\lambda}{2\pi dN}\alpha^{*}+\sin(\theta)\right)-\theta.
\end{align}
\begin{remark} In order to be able to inverse the $sine$ function, a rule of thumb is that the boundary of $\theta+\Delta$ should be in the range [$-\pi/2, \pi/2$]. This is somewhat straightforward as it is a common knowledge that the visibility region (effective coverage range) of an ULA is between [$-\pi/2, \pi/2$] and hence $\theta+\Delta$ should be \cite{wide2023ming}. Moreover, in practice, the visible region is normally smaller to avoid granting lobe governed by the following constraint:

\begin{equation} \label{eq:dIneq}
    d \leq \frac{\lambda}{1 + \abs{\sin(\max\{\theta+\Delta\})}},
\end{equation}
where $\max\{\theta+\Delta\}\in [-\pi/2, \pi/2]$ is the maximum steering angle with respect to vector normal to the midpoint of the antenna array.\vspace{-0.5cm}\end{remark}
\begin{table}[h!]
\centering
\caption{Coverage range of a steering vector ($\Delta$ bounds) in degrees: analytical vs numerical results.}
\label{tab:numerVsAnalytic}
\begin{tabular}{|c|c|c|c|c|}
\hline
\textbf{$\theta$} & $\mathrm{L}\Delta_{\theta}$(num.) & $\mathrm{L}\Delta_{\theta}$ & $\mathrm{U}\Delta_{\theta}$(num.)& $\mathrm{U}\Delta_{\theta}$\\
\hline
$0$ & -6.4013 & -6.3578 & 6.4013 & 6.3578\\
$15$ & -6.5279 & -6.4842 & 6.7347 & 6.6882\\
$30$ & -7.1382 & -7.0913 & 7.6974 &7.6428\\
$45$ & -8.4434 & -8.3896 & 9.9447 & 9.8695\\
$60$ & -11.0153 & -10.9494 & 17.8272 & 17.6240\\
\hline
\end{tabular}
\end{table}

In Tab. \ref{tab:numerVsAnalytic}, we compare the results for the range \(\Delta\) obtained through the analytical solution with those obtained numerically, for the case of $N=8$, $d=\lambda/2$ and $\gamma=3$\,dB. The table shows that the difference is negligible. Therefore, the provided analytical solution can be considered effectively exact for moderate values of \(\gamma_{\mathrm{f}}.\)

Equation \eqref{eq:ULAdeltaInterval} provides an analytical solution for the boundary of \( \Delta \) that ensures the array gain loss remains within \( \gamma \) dB, representing the angular coverage boundaries of a steering vector centered at \( \theta \). It is evident that the expression is a function of \( \theta \), which implies that the allowable deviation (coverage range) varies with the angle \( \theta \). Therefore, the beam sweeping step size is not constant for a given value of \( \gamma \); instead, it is angle-dependent. 

Fig. \ref{fig:devPlots} illustrates further \( D_{\theta}(\Delta) \) as a function of \( \Delta \), with results obtained for \( d = \lambda/2 \) and \( N = 8 \). The figure shows that, for a given maximum allowed degradation, the steering vector coverage is smaller as \( \theta \) approaches zero and increases as \( \theta \) diverges. Furthermore, for \( \gamma = 3 \) dB, the figure demonstrates that only a few steering vectors are needed to cover the range \( [0, 60^\circ] \).

\begin{figure}[t]
    \centering
        \includegraphics[width = .9\linewidth]{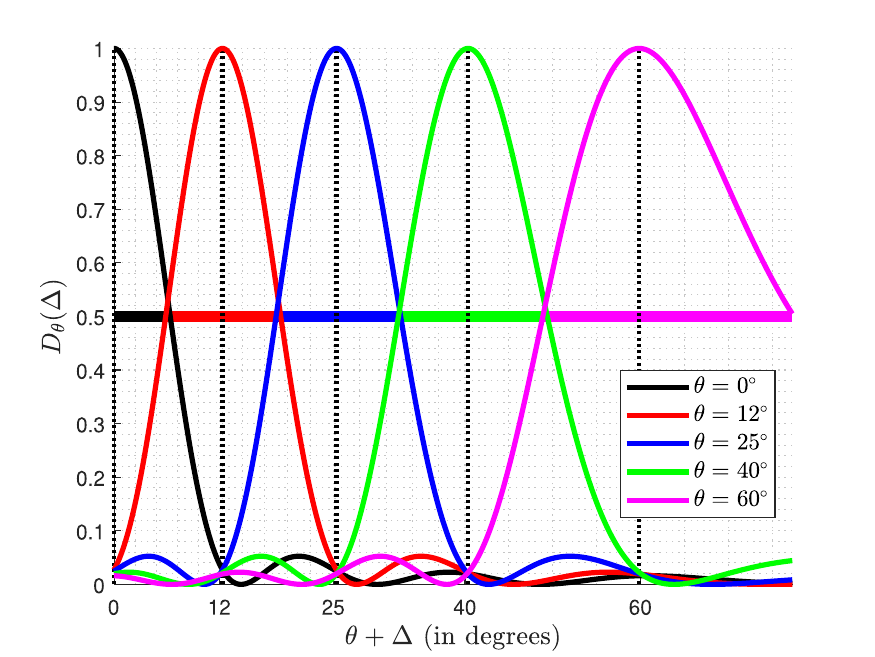}
        \caption{Achievable-to-maximum array gain as a function of the angle deviation $\Delta$ for different values of $\theta$.}\vspace{-0.3cm} 
        \label{fig:devPlots}
\end{figure}
\subsection{The case of an URA} \label{DevFromMaxGainForURA}
Consider an URA with a strong ray arriving from a DoA defined through an azimuth-elevation angle pair $(\theta_1,\theta_2)$. The array manifold vector of an $N_1 \times N_2$ URA is
\begin{equation} 
\begin{aligned}
\mathbf{a}(\theta_1, \theta_2)&=[a_n(\theta_1,\theta_2)]_{n=1..N_1\times N_2}\\&=\beta\mathbf{v}_{1}(\theta_1, \theta_2) \otimes \mathbf{v}_{2}(\theta_1, \theta_2),
\end{aligned}
\end{equation}
where $\mathbf{v}_{1}$ and $\mathbf{v}_{2}$ are the vectors on the horizontal and vertical directions, respectively, with expressions
\begin{align} \label{eq:steeringVecs}
&\mathbf{v}_{1}(\theta_1, \theta_2)=\nonumber\\&\qquad \left[1, e^{j \frac{2\pi}{\lambda}d_{1} \sin \theta_2 \cos \theta_1} \cdots, e^{j \left(N_1-1\right) \frac{2\pi}{\lambda}d_{1} \sin \theta_2 \cos \theta_1}\right]^{T},\\
&\mathbf{v}_{2}(\theta_1, \theta_2)=\nonumber\\&\qquad\left[1, e^{j \frac{2\pi}{\lambda}d_{2} \sin \theta_2 \sin \theta_1} \cdots, e^{j\left(N_2-1\right) \frac{2\pi}{\lambda}d_{2} \sin \theta_2 \sin \theta_1}\right]^{T}. 
\end{align}
Here, $d_1$ and $d_2$ are the horizontal and vertical distances between neighboring antennas, respectively.
Similar to the case of a URA, the antenna array gain is written as follows:
\begin{equation}\label{eq:URAgain}
\begin{aligned}
G(\mathbf{w},\theta_1, \theta_2)&= \abs{\mathbf{a}(\theta_1, \theta_2)\mathbf{w}}^2 \\&= \frac{1}{N_1N_2}\abs{\sum_{n=1}^{N_1N_2}a_n(\theta_1, \theta_2)e^{-j\phi_n}}^2. 
\end{aligned}
\end{equation}
Considering an infinitely large beam sweeping codebook with an infinitesimally small step, there will be a beam with a direction that perfectly aligns with the DoA of the strongest ray. The maximum array gain is achieved by setting the phase of \( e^{-j\phi_n} \) to be equal to the phase of the complex conjugate of \( a_n(\theta_1, \theta_2) \), resulting in an array gain of \( \beta^2 N_1 N_2 \). Let \( \mathbf{w}_{\theta_1, \theta_2} \) represent the optimal steering vector for the azimuth-elevation angle pair \( (\theta_1, \theta_2) \), with elements defined as before. Now, consider applying the steering vector \( \mathbf{w}_{\theta_1, \theta_2} \) to a signal with an incident wave with slightly different angle \( (\theta_1 + \Delta_1, \theta_2 + \Delta_2) \). In this scenario, the array gain is degraded by a factor given by:

\begin{equation}\label{eq:URAdevgain}
\begin{aligned}
    D_{\theta_1,\theta_2}(\Delta_1,\Delta_2)
&= \frac{G(\theta_1 + \Delta_1, \theta_2 + \Delta_2, \mathbf{w}_{\theta_1,\theta_2})}{G(\theta_1, \theta_2, \mathbf{w}_{\theta_1,\theta_2})} \\&= \frac{G(\theta_1 + \Delta_1, \theta_2 + \Delta_2, \mathbf{w}_{\theta_1,\theta_2})}{N_1N_2\abs{\beta}^2}\\
   & = \frac{1}{(N_1N_2)^2} \abs{\sum_{m=1}^{N_2}\sum_{n=1}^{N_1}e^{j(m-1)z_2} e^{j(n-1)z_1}}^2,
\end{aligned}
\end{equation}
where
\begin{align}\label{eq:URAdev}
z_1 = \frac{2\pi}{\lambda}d_1[\sin(\theta_1+\Delta_1)\cos(\theta_2+\Delta_2) - \sin(\theta_1)\cos(\theta_2)], \\
z_2 = \frac{2\pi}{\lambda}d_2[\sin(\theta_1+\Delta_1)\sin(\theta_2+\Delta_2) - \sin(\theta_1)\sin(\theta_2)].
\end{align}

Using the geometric sum formula and the respective trigonometric identity, (\ref{eq:URAdevgain}) can be simplified to 
\begin{equation}\label{eq:URAdevSimp}
\begin{aligned}    
    D_{\theta_1,\theta_2}(\Delta_1,\Delta_2) = \frac{1}{(N_1N_2)^2}\prod_{i=1}^2\abs{\frac{1 - e^{jN_iz_i}}{1-e^{jz_i}}}^2 \\ =  \frac{1}{(N_1N_2)^2}\prod_{i=1}^2\frac{1-\cos(N_iz_i)}{1-\cos(z_i)} \\ =  \left(\prod_{i=1}^2\frac{1}{N_i}\frac{\sin(N_iz_i/2)}{\sin(z_i/2)}\right)^2.  
\end{aligned}
\end{equation}
Constraining the maximum degradation to $\gamma_{\mathrm f}$, we have 

\begin{equation}\label{eq:URAineq2}
    D_{\theta_1,\theta_2}(\Delta_1,\Delta_2) \geq \frac{1}{\gamma_{\mathrm f}} \Longleftrightarrow \abs{\prod_{i=1}^2\frac{1}{N_i}\frac{\sin(N_iz_i/2)}{\sin(z_i/2)}} \geq \frac {1}{\sqrt{\gamma_{\mathrm f}}}.
\end{equation}
Different from the case of URA, considering the expressions of $z_i$ as a function of the azimuth-elevation coordinate system leads to intractable solution for the steering vector coverage range. Instead, we express the incident angle and the array gain degradation as functions of the angles relative to the x-axis and y-axis. Fig. \ref{fig:angleSys} provides a visual comparison between the azimuth-elevation angle system and the one specified by the rotation angles with respect to the x and y axes, denoted by $\theta_x$ and $\theta_y$, respectively. Note that there exists a one-to-one mapping system, ensuring that this has no impact on identifying the subset of steering vectors. Given values in one coordinate space, the corresponding coordinates in the other space can be uniquely determined. Specifically, we have:
\begin{align}
    x = \sin(\theta_2)\cos(\theta_1), \\
    y = \sin(\theta_2)\sin(\theta_1).
    \end{align}
for the Cartesian coordinates $x,y$ and the azimuth-elevation angles $\theta_1,\theta_2$. Thus, we get the relation
\begin{align} \label{eq:elevxyTrans}
    x = \sin(\theta_y) = \sin(\theta_2)\cos(\theta_1), \\
    y = \sin(\theta_x) = \sin(\theta_2)\sin(\theta_1).
    \end{align} 
between the pairs $(\theta_1,\theta_2)$ and $(\theta_x,\theta_y)$. Given that the visibility of a ULA lies within the range \([-\pi/2, \pi/2]\), the mapping described above provides a valid and unique one-to-one correspondence, i.e., each pair \((\theta_1, \theta_2)\) uniquely maps to a pair \((\theta_x, \theta_y)\) and vice versa.

\begin{figure}[h!]
    \centering
    \includegraphics[width=.9\linewidth]{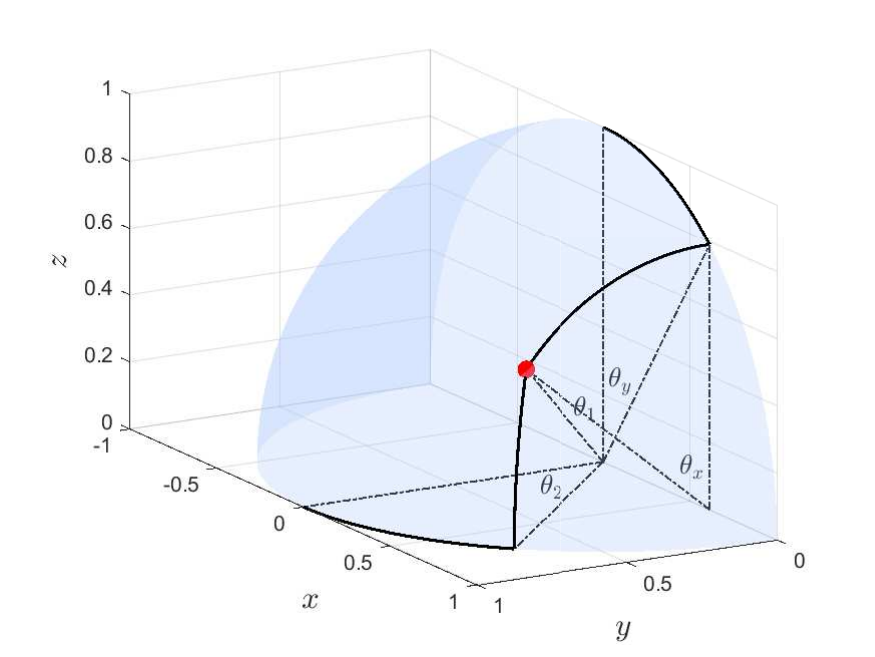}
    \caption{Azimuthal-elevation vs Cartesian coordinates systems}
    \label{fig:angleSys}
\end{figure}

Let \(\Delta_x\) and \(\Delta_y\) represent the angular deviations with respect to the x-axis and y-axis, respectively. Accordingly, we have:
\begin{align} \label{eq:steeringVecsElevxy}
&\mathbf{v}_{1}(\theta_x+\Delta_x) =\nonumber\\& \left[1, e^{j \frac{2\pi}{\lambda}d_{1} \sin(\theta_x+\Delta_x)} \cdots, e^{j \left(N_1-1\right) \frac{2\pi}{\lambda}d_{1} \sin(\theta_x+\Delta_x)}\right]^{T},\\
&\mathbf{v}_{2}(\theta_y+\Delta_y) =\nonumber\\& \left[1, e^{j \frac{2\pi}{\lambda}d_{2} \sin(\theta_y+\Delta_y)} \cdots, e^{j \left(N_2-1\right) \frac{2\pi}{\lambda}d_{2} \sin(\theta_y+\Delta_y)}\right]^{T}.
\end{align}
Moreover, the expression for the ratio of the achievable array gain to the maximum array gain, considering an angular deviation of \((\Delta_x, \Delta_y)\), is given by:
\begin{equation}\label{DevElevxy}
    \bar{D}_{\theta_x,\theta_y}(\Delta_x,\Delta_y) =  \left(\prod_{i=1}^2\frac{1}{N_i}\frac{\sin(N_i\bar{z_i}/2)}{\sin(\bar{z_i}/2)}\right)^2, 
\end{equation}
where 
\begin{align}\label{eq:URAzElevxy}
\bar{z}_1 = \frac{2\pi}{\lambda}d_1[\sin(\theta_x+\Delta_x) - \sin(\theta_x)], \\
\bar{z}_2 = \frac{2\pi}{\lambda}d_2[\sin(\theta_y+\Delta_y) - \sin(\theta_y)].
\end{align}
 Analogous to (\ref{eq:URAineq2}), we can write
\begin{equation}\label{eq:URAineq2Elevxy}
\bar{D}_{\theta_x,\theta_y}(\Delta_x,\Delta_y) \geq \frac{1}{\gamma_{\mathrm f}} \Longleftrightarrow \prod_{i=1}^2\frac{1}{N_i}\frac{\abs{\sin(N_i\bar{z_i}/2)}}{\abs{\sin(\bar{z_i}/2)}} \geq \frac {1}{\sqrt{\gamma
_{\mathrm f}}}.
\end{equation}

\begin{lemma}\label{lem:2}
Considering $\gamma_{\mathrm f} > 1$, the following inequality holds:

\begin{equation}\label{eq:lem2ineq}
\prod_{i=1,2} \frac{1}{N_i} \frac{\abs{\sin(N_i \bar{x}_i / 2)}}{\abs{\sin(\bar{z}_i / 2)}} \geq \frac{1}{\sqrt{\gamma_{\mathrm f}}},
\end{equation}
when $ \bar{z}_i\in[-\frac{2}{N_i} \alpha^{*}, \frac{2}{N_i} \alpha^{*}], \, i = \{1,2\} $. Here, $\alpha^{*}$ is the unique root of the equation:
$$\frac{\sin(\alpha)}{\alpha} = \frac{1}{\sqrt[4]{\gamma_{\mathrm f}}},$$
within the range $[0, \pi]$.
\end{lemma}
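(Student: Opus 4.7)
The plan is to reduce the two-dimensional product bound to a one-dimensional $\sin(\alpha)/\alpha$ inequality, then exploit the monotonicity of that function on $[0,\pi]$. Because the left-hand side of \eqref{eq:lem2ineq} factorizes across $i=1,2$, it suffices to show that, for each $i$ separately, the factor $\frac{1}{N_i}\frac{\abs{\sin(N_i \bar{z}_i/2)}}{\abs{\sin(\bar{z}_i/2)}}$ is at least $\gamma_{\mathrm f}^{-1/4}$ whenever $\abs{\bar{z}_i}\leq 2\alpha^{*}/N_i$; multiplying the two resulting inequalities gives the claimed product bound $\gamma_{\mathrm f}^{-1/2}$.

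First I would substitute $u_i := N_i\bar{z}_i/2$ so that the admissible interval collapses to $\abs{u_i}\leq \alpha^{*}$ and the factor becomes $\frac{\abs{\sin u_i}}{N_i\abs{\sin(u_i/N_i)}}$. Applying the elementary inequality $\abs{\sin x}\leq \abs{x}$ at $x = u_i/N_i$ yields $N_i\abs{\sin(u_i/N_i)}\leq \abs{u_i}$, and hence
\[
\frac{\abs{\sin u_i}}{N_i\abs{\sin(u_i/N_i)}} \;\geq\; \frac{\abs{\sin u_i}}{\abs{u_i}}.
\]
The problem thus reduces to showing that $h(u):=\sin(u)/u$, extended by $h(0)=1$, satisfies $h(u)\geq \gamma_{\mathrm f}^{-1/4}$ whenever $\abs{u}\leq \alpha^{*}$. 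Note that this step is where the square root on $\sqrt[4]{\gamma_{\mathrm f}}$ enters, and it tightens cleanly because the per-factor threshold is symmetric across $i=1,2$.

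For the one-variable step I would verify that $h$ is even, positive on $(-\pi,\pi)$, and strictly decreasing in $\abs{u}$ on $[0,\pi]$. Monotonicity follows from $h'(u)=(u\cos u-\sin u)/u^{2}$ together with the observation that $g(u):=u\cos u-\sin u$ satisfies $g(0)=0$ and $g'(u)=-u\sin u\leq 0$ on $[0,\pi]$, so $g\leq 0$ and $h'\leq 0$ on that interval. Combined with the boundary values $h(0)=1$ and $h(\pi)=0$, this monotonicity simultaneously proves uniqueness of the root $\alpha^{*}\in[0,\pi]$ of $h(\alpha)=\gamma_{\mathrm f}^{-1/4}$; existence is guaranteed because $\gamma_{\mathrm f}>1$ places $\gamma_{\mathrm f}^{-1/4}\in(0,1)$, a value strictly between the two endpoints. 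Monotonicity applied to any $\abs{u_i}\leq \alpha^{*}\leq \pi$ then yields $\abs{\sin u_i}/\abs{u_i}\geq h(\alpha^{*})=\gamma_{\mathrm f}^{-1/4}$, which is exactly the per-factor bound needed.

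I expect no substantial obstacle in this proof. The only non-trivial micro-step is the monotonicity of $h$ on $[0,\pi]$, and even that collapses to the sign of $g'(u)=-u\sin u$. The main conceptual points worth emphasizing in the write-up are (i) that the factorization of the product across the two array dimensions is what allows a clean per-coordinate bound, and (ii) that the elementary inequality $\abs{\sin x}\leq \abs{x}$ is precisely what passes from the exact Dirichlet-kernel-type factor $\frac{\sin(Nz/2)}{N\sin(z/2)}$ to the idealized cardinal sine $\sin(u)/u$ that defines $\alpha^{*}$; no large-$N_i$ approximation is needed here.
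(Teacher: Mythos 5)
Your proposal is correct and follows essentially the same route as the paper's own proof: bound the denominator via $\abs{\sin(\bar{z}_i/2)}\leq\abs{\bar{z}_i/2}$ to reduce each factor to the cardinal sine $\sin(u)/u$, establish strict monotonicity of that function on $[0,\pi]$ through the sign of $u\cos u-\sin u$, and conclude existence and uniqueness of $\alpha^{*}$ together with the per-factor bound $\gamma_{\mathrm f}^{-1/4}$, whose product over $i=1,2$ gives $\gamma_{\mathrm f}^{-1/2}$. Your write-up is in fact slightly more complete than the paper's, since you justify the negativity of the derivative via $g(0)=0$ and $g'(u)=-u\sin u\leq 0$ rather than asserting it.
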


\begin{proof}
We have $$\abs{\sin\left(\frac{\bar{z}_i}{2}\right)}\leq \abs{\frac{\bar{z}_i}{2}}.$$ Thus, the range of values of \( \bar{z}_i \) in which
\begin{equation}\label{eq:inealpha}
\frac{\abs{\sin(N_i\bar{z_i}/2)}}{\abs{N_i\bar{z_i}/2}} \geq \frac {1}{\sqrt[4]{\gamma
_{\mathrm f}}},\,\, i={1,2}\end{equation} holds, also satisfies \eqref{eq:URAineq2Elevxy}.
Since \(\gamma_{\mathrm{f}}\) is strictly greater than 1, the same applies to \(\sqrt[4]{\gamma_{\mathrm{f}}}\). Now, consider the inequality:
\[
\frac{\abs{\sin(\alpha)}}{\abs{\alpha}} \geq \frac{1}{\sqrt[4]{\gamma_{\mathrm{f}}}}.
\]
The first derivative of the function on the left-hand side is given by:
\[
\frac{\alpha \cos(\alpha) - \sin(\alpha)}{\alpha^2}.
\]
This derivative is negative in the range \((0, \pi]\), indicating that the function is strictly decreasing within this interval. Moreover, the function takes the value 1 when \(\alpha = 0\) and decreases to 0 as \(\alpha\) approaches \(\pi\). Together, these imply that the function intersects the constant line defined by \(0 < \frac{1}{\sqrt[4]{\gamma_{\mathrm{f}}}} < 1\) exactly once within the range \([0, \pi]\). Thus, there exists a unique \(\alpha^{*} \in [0, \pi]\) such that:
\[
\frac{\abs{\sin(\alpha)}}{\abs{\alpha}} = \frac{1}{\sqrt[4]{\gamma_{\mathrm{f}}}}.
\]
 Furthermore, for \(\alpha \in [-\alpha^{*}, \alpha^{*}]\), the inequality in \eqref{eq:URAineq2Elevxy} holds. Given that the function is bijective and following the same reasoning, it can be concluded that the inequality also holds within \([- \alpha^{*}, 0]\), resulting a total range \([- \alpha^{*}, \alpha^{*}]\) where the inequality is satisfied. 

Substituting $\alpha$ by $N_i\bar{z}_i/2$ implies that the constraint in \eqref{eq:lem2ineq} is satisfied when $$ \bar{z}_i\in[-\frac{2}{N_i} \alpha^{*}, \frac{2}{N_i} \alpha^{*}], \, i = 1,2.$$
\end{proof}

Considering the expression of $\bar{z}_i$ and the results provided in Lem. \ref{lem:2}, we obtain
\begin{align}\label{eq:URAzIneqElevxyExp} 
   \frac{-\alpha^{*}\lambda}{\pi d_1 N_1} + \sin(\theta_x) \leq \sin(\theta_x + \Delta_x) \leq \frac{\alpha^{*}\lambda}{\pi d_1 N_1} + \sin(\theta_x) \\
   \frac{-\alpha^{*}\lambda}{\pi d_2 N_2} + \sin(\theta_y) \leq \sin(\theta_y + \Delta_y) \leq \frac{\alpha^{*}\lambda}{\pi d_2 N_2} + \sin(\theta_y).
\end{align}
Recall that the visibility range of an ULA is within $[-\pi/2,\pi/2]$, and hence the $\sin(\cdot)$ function is invertible. This gives
\begin{equation}\label{eq:URAzIneqElevxFin}
\begin{aligned} 
   &\arcsin\left(\frac{-\alpha^{*}\lambda}{\pi d_1 N_1} + \sin(\theta_x)\right) - \theta_x \leq \Delta_x \\&\qquad\qquad\qquad\leq \arcsin\left(\frac{\alpha^{*}\lambda}{\pi d_1 N_1} + \sin(\theta_x)\right) - \theta_x, 
\end{aligned}
\end{equation}
\begin{equation}\label{eq:URAzIneqElevyFin}
\begin{aligned} 
   &\arcsin\left(\frac{-\alpha^{*}\lambda}{\pi d_2 N_2} + \sin(\theta_y)\right) - \theta_y \leq \Delta_y  \\&\qquad\qquad\qquad\leq \arcsin\left(\frac{\alpha^{*}\lambda}{\pi d_2 N_2} + \sin(\theta_y)\right) - \theta_y.
\end{aligned}
\end{equation}

\section{Codebook refinement as a set cover problem} \label{sec:minCodebookSCP}
In the previous section, we derived the expression for the coverage range (angle deviation range) of a steering vector that ensures the array gain loss remains within $\gamma$\,dB from the maximum gain. Recall that this range is not constant; rather, it depends on the angle $\theta$ of the steering vector. Specifically, for angles near zero, the array gain decays more rapidly with increasing error compared to larger $\theta$ values. This behavior is illustrated in Fig. \ref{fig:devPlots}, which provides a visual example. 
\par The dependence of the coverage range on the steering angle indicates that optimal beam sweeping should not be performed with equal steps. Therefore, it is crucial to carefully select the steering vectors to ensure complete coverage of the URA/ULA visibility range using the minimum number of vectors. Building on the derived expression for the coverage range of a steering vector, we formulate the problem of constructing a codebook with reduced size denoted by $\zeta_R$. We then leverage results from the set cover problem to develop a near-optimal algorithmic solution with polynomial complexity. Specifically, we demonstrate that the problem converges to a special case of the set cover problem. It is important to note that the codebook refinement process is carried out offline, and during transmission (online), only the refined codebook is used. This approach is employed in the experimental results, where the refined codebook is utilized and its performance is compared to that of the entire/intial codebook with large size.

\subsection{The ULA Case}
Let \( \Phi = \{\phi_1, \phi_2, \dots, \phi_M\} \) denote the set of beam sweeping directions of the initial and large codebook (\( \mathcal{C} = \{\mathbf{w}_{\phi_1}, \mathbf{w}_{\phi_2}, \dots, \mathbf{w}_{\phi_M}\} \)), which spans the angular visibility range of the ULA, denoted by \( V = [\phi^{\min}, \phi^{\max}] \). The angles in the set are sorted in increasing order. A steering vector \( \mathbf{w}_{\phi_M} \) can cover a range \( V_{\mathbf{w}_m} = [-\mathrm{L}\Delta_{\phi_m}, \mathrm{U}\Delta_{\phi_m}] \), as defined in \eqref{eq:ULAdeltaInterval}, while maintaining the gap to the maximum within \( \gamma \) dB. The codebook refinement problem can thus be formulated as:
\begin{equation}
\label{eq:ULASCPgeneral}
\begin{aligned}
\underset{a_m, m=1..M}{\text{minimize}} \quad & \sum_{i=1}^M{a_i}\\
\textrm{subject to} \quad & \bigcup_{m=1..M}a_mV_{\mathbf{w}_m}\supseteq V\\
\quad& a_m\in\{0,1\},\,\, m=1..M 
\end{aligned}
\end{equation}
In the above minimization problem, the angular visibility of the ULA and the angular coverage of the steering vectors are continuous. By discretizing both through uniform sampling, we transform the problem into a set cover problem. Specifically, we discretize the visibility range \( V \) into a vector of binary values, where each index corresponds to a specific angle, and the value at each index is \( 1 \) if the corresponding angle is within the visibility range, and \( 0 \) otherwise. Similarly, the angular coverage of each steering vector is represented as a binary vector, where the value is \( 1 \) if the corresponding angle in \( V \) is covered by the steering vector and \( 0 \) otherwise. Let \( \tilde{V} \) and \( \tilde{V}_{\mathbf{w}_m} \) denote the discrete versions of \( V \) and \( V_{\mathbf{w}_m} \), respectively. This discretization transforms the problem into a set cover problem \cite{combinatorial2021korte}, which can be formulated as follows:

\begin{equation}
\label{eq:ULASCP}
\begin{aligned}
\underset{a_m, m=1..M}{\text{minimize}} \quad & \sum_{i=1}^M{a_i}\\
\textrm{subject to} \quad & \sum_{m=1..M} a_m \tilde{V}_{\mathbf{w}_m} = \tilde{V}\\
\quad & a_m \in \{0, 1\}, \quad m = 1..M
\end{aligned}
\end{equation}
Although the greedy algorithm is a well-known approach for solving set cover problems, the problem presented here has a distinctive property: each steering vector’s coverage forms a contiguous block of ones, with zeros outside the covered range. This feature allows for the reduction of the search space. The optimization process begins by sorting the steering vectors such that the vector that covers the smallest uncovered element in \( \tilde{V} \) is placed first, followed by the others in increasing order of their starting coverage element. At each step, the goal is to cover the first uncovered element of the remaining set by selecting the vector that extends the coverage as far as possible, i.e., the vector that covers the first uncovered element and provides the largest coverage. After each vector is selected, the covered elements are marked, and the process repeats for the remaining uncovered elements. This greedy approach ensures that each vector is chosen to maximize coverage while minimizing redundancy, thus efficiently solving the set cover problem. The pseudo-code for this algorithm is shown in Alg. \ref{alg:ulaAlg}. After solving the optimization problem and obtaining the values of \( a_m \), the refined codebook is reconstructed as: 
$
\zeta_R = \bigcup_{m=1}^{M} a_m \mathbf{w}_{m}.
$

\begin{algorithm}
\caption{Greedy Set Cover with Contiguous Coverage}
\label{alg:ulaAlg}
\begin{algorithmic}[1]
\STATE Initialize:\\
\(\tilde{V_1}\) = $\{\phi_1,\phi_2, \dots, \phi_M\}$ where $\phi_1\leq\phi_2\leq\dots\leq\phi_M$.\\
\(\tilde{V}_\text{uncovered} \) be the set of uncovered elements.\\
$\zeta_R \gets \varnothing$.\\
$i\gets 1$

\WHILE{\(\tilde{V}_\text{uncovered} \neq \emptyset\)}
    \STATE $i \gets \min(\{i:\phi_i\in\tilde{V}_{\text{uncovered}}\})$
    \STATE $N_{m,i} = \max(\{card| \{\phi_i,\phi_{i+1},\dots\}|:V_{\mathbf{w}_m} \supset \{\phi_i,\phi_{i+1},\dots\}\})$.
    \STATE $\zeta_R \gets \zeta_R \cup \{\mathbf{w}_{\hat{m}}\}$
    \STATE $\tilde{V}_\text{uncovered}$ $\gets \tilde{V}_\text{uncovered} \setminus \{\phi_j, \phi_{j+1}, \dots, \phi_{N_{\hat{m},i}}\}$
\ENDWHILE
\RETURN $\zeta_R$
\end{algorithmic}
\end{algorithm}

\subsection{The URA Case} \label{sec:SCPURAcase}
Considering the case of an URA, we define two sets corresponding to the azimuthal and elevation beam sweeping directions that defines the initial and large codebook: \( \Phi_1 = \{\phi_{1,1}, \phi_{1,2}, \dots, \phi_{1,M}\} \) and \( \Phi_2 = \{\phi_{2,1}, \phi_{2,2}, \dots, \phi_{2,K}\} \). These sets span the angular visibility range of the URA, denoted by \( V = [\phi^{\min}_1, \phi^{\max}_1] \times [\phi^{\min}_2, \phi^{\max}_2] \), where the angles in each set are sorted in increasing order. The steering vector corresponding to an azimuth-elevation direction pair \( (\phi_{1,i}, \phi_{2,j}) \) is denoted by \( \mathbf{w}_{i,j} \), which achieves the maximum array gain when the incident wave arrives from this direction. A steering vector \( \mathbf{w}_{i,j} \) can cover a range \( V_{\mathbf{w}_{i,j}} = [\mathrm{L}\Delta_{\phi_{1,i}}, \mathrm{U}\Delta_{\phi_{1,i}}] \times [\mathrm{L}\Delta_{\phi_{2,j}}, \mathrm{U}\Delta_{\phi_{2,j}}] \), as defined in \eqref{eq:URAzIneqElevxFin} and \eqref{eq:URAzIneqElevyFin}, while maintaining the gap to the maximum within \( \gamma \) dB. The codebook refinement problem can thus be formulated as:

\begin{equation}
\label{eq:URASCPgeneral}
\begin{aligned}
\underset{a_{i,j}}{\text{minimize}} \quad & \sum_{i=1}^M\sum_{j=1}^K{a_{i,j}}\\
\textrm{subject to} \quad & \bigcup_{i,j}a_{i,j}V_{\mathbf{w}_{i,j}}\supseteq t V\\
\quad& a_{i,j}\in\{0,1\}, i=1..M 
\, and \,\, j=1..K
\end{aligned}
\end{equation}

Similar to the case of ULA, by sampling the visibility range and the coverage range of the steering vectors, we obtain

\begin{equation}
\label{eq:URASCP}
\begin{aligned}
\underset{a_{i,j}}{\text{minimize}} \quad & \sum_{i=1}^M\sum_{j=1}^K{a_{i,j}}\\
\textrm{subject to} \quad & \sum_{i,j}a_{i,j}\tilde{V}_{\mathbf{w}_{i,j}}\supseteq \tilde{V}\\
\quad& a_{i,j}\in\{0,1\}, i=1..M 
\, and\,\, j=1..K
\end{aligned}
\end{equation}
Here, $\tilde{V}$ and $\tilde{V}_{\mathbf{w}_{i,j}}$ are the discrete versions of $V$ and $V_{\mathbf{w}_{i,j}}$, respectively. Since the coverage space is two-dimensional, the problem corresponds to a geometric set cover problem. The problem shares the same property with the ULA case, where the points covered by a steering vector are contiguous. This property can improve the convergence time of the greedy algorithm. While the algorithm differs from the ULA case, we propose a modified version. In particular, at each iteration, the algorithm selects the steering vector that covers either the minimum azimuthal or minimum elevation angles in the uncovered set while maximizing the total cardinality of the covered set in both azimuthal and elevation dimensions. For the sake of the clarity, the pseudo-code is provided in Alg. \ref{alg:uraAlg}.

In the algorithm's initialization step, the azimuth and elevation angular ranges of the URA are provided as input and treated as an uncovered set. The total set of angle pairs to be covered is then defined as the Cartesian product of the specified angular ranges. The refined codebook, \( \zeta_R \), is initially empty. The algorithm iteratively selects the steering vector that provides the maximum coverage, prioritizing those that cover the angle pair with either the smallest uncovered azimuth angle or the smallest uncovered elevation angle. Once a steering vector is selected, it is added to $\zeta_R$, and its coverage is removed from the set of uncovered angle pairs. This process continues until all angle pairs are covered.

\begin{algorithm}[t]
\caption{Greedy Set Cover with Contiguous Two Dimensional Coverage}
\label{alg:uraAlg}
\begin{algorithmic}[1]
\STATE Initialize:\\
\(\tilde{V}_1\) = $\{\phi_{1,1},\phi_{1,2}, \dots, \phi_{1,M}\}$ where $\phi_{1,1}\leq\phi_{1,2}\leq\dots\leq\phi_{1,M}$.\\
\(\tilde{V}_2\) = $\{\phi_{2,1},\phi_{2,2}, \dots, \phi_{2,K}\}$ where $\phi_{2,1}\leq\phi_{2,2}\leq\dots\leq\phi_{2,K}$.\\
Let \(\tilde{V}_\text{uncovered} = \tilde{V}_1 \times \tilde{V}_2 \) be the set of uncovered elements.\\
$\zeta_R \gets \varnothing$.\\
$i\gets 1$

\WHILE{\(\tilde{V}_\text{uncovered} \neq \emptyset\)}
    \STATE $i \gets \min(\{i:(\phi_{1,i},\phi_{2,k})\in\tilde{V}_{\text{uncovered}}\})$
    \STATE $j \gets \min(\{j:(\phi_{1,i},\phi_{2,j})\in\tilde{V}_{\text{uncovered}}\})$
    \STATE $(l,m) \gets \underset{l,m}\argmin(\{|\tilde{V}_\text{uncovered}\setminus \tilde{V}_{\mathbf{w}_{l,m}}|:\tilde{V}_{\mathbf{w}_{l,m}} \supset\{(\phi_{1,i},\phi_{2,j})\})$ 
    \STATE $\zeta_R \gets \zeta_R \cup \{\mathbf{w}_{l,m}\}$
    \STATE $\tilde{V}_\text{uncovered}$ $\gets \tilde{V}_\text{uncovered} \setminus \tilde{V}_{\mathbf{w}_{l,m}}$
\ENDWHILE
\RETURN $\zeta_R$
\end{algorithmic}
\end{algorithm}

\section{Experimental Results}
\subsection{Experimental Setup}\label{sec:expSetup}
The experimental results encompass two environments. The first is an anechoic chamber, where both LoS and NLoS scenarios are considered. The second environment involves an indoor office/laboratory setting, where the receiver is mounted on a mobile cart that is randomly positioned and oriented. In both environments, the receiver's orientation is random, with no prior knowledge of its orientation or position. More details on the experiment environments, including the dimensions are provided in Fig. \ref{anechoic_chamber}. The receiver is solely endowed with the refined version of the intial codebook \(\zeta\), denoted by $\zeta_R$. For comparison purposes, the maximum array gain achieved through channel estimation is used as a benchmark. Moreover, we compare the performance with hierarchical beamsteering, in which the same assumptions hold including the absence of prior knowledge about the location or orientation. 

\begin{figure*}[]
     \centering
     \subfigure[Illustration of anechoic chamber setup]{\includegraphics[width=3in]{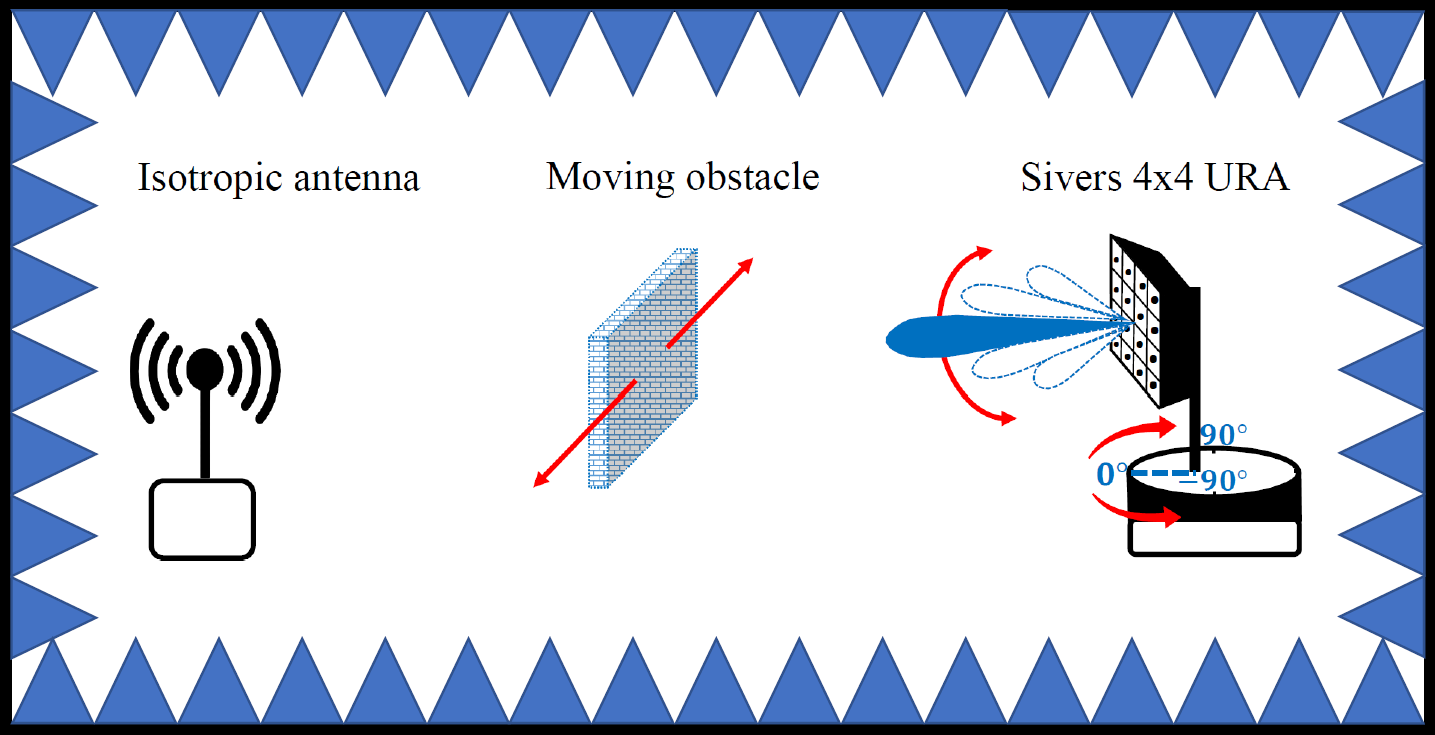} \label{fig:illustration}}
\hfill
    \subfigure[Anechoic chamber real setup]{\includegraphics[width=3in, height=1.52in]{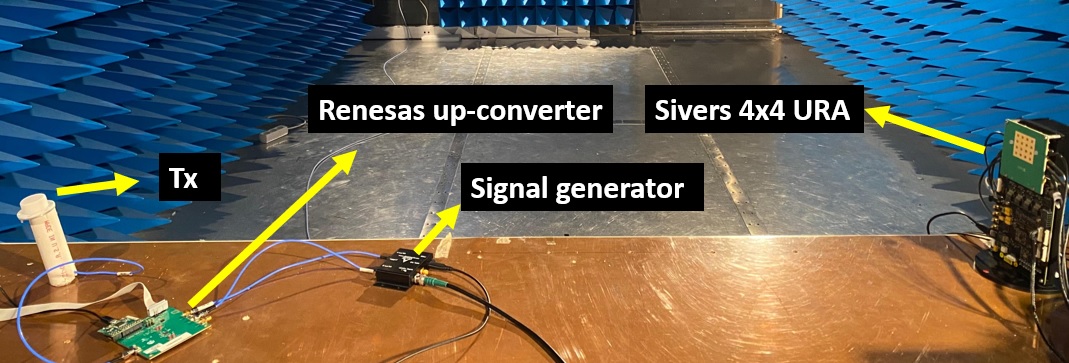} \label{fig:illustration}}
\hfill
     \subfigure[Illustration of lab/office environment]{\includegraphics[width=3in]{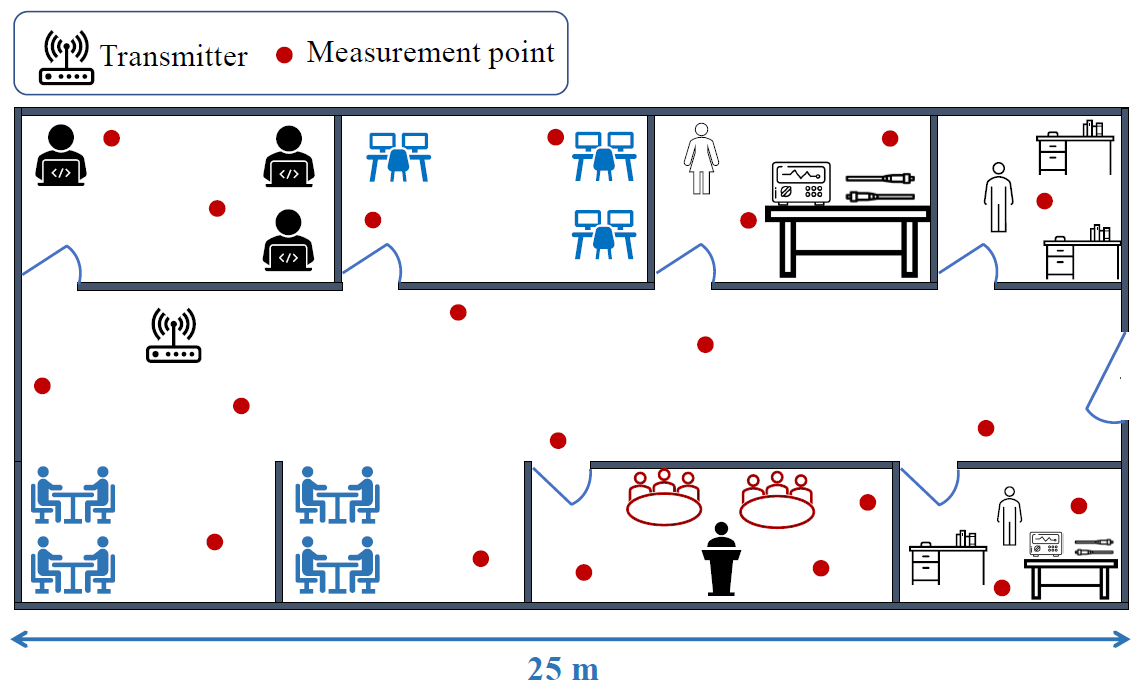}\label{fig:fixedAnechoicSetup}}
\hfill
     \subfigure[Lab/office environment]{\includegraphics[width=3in]{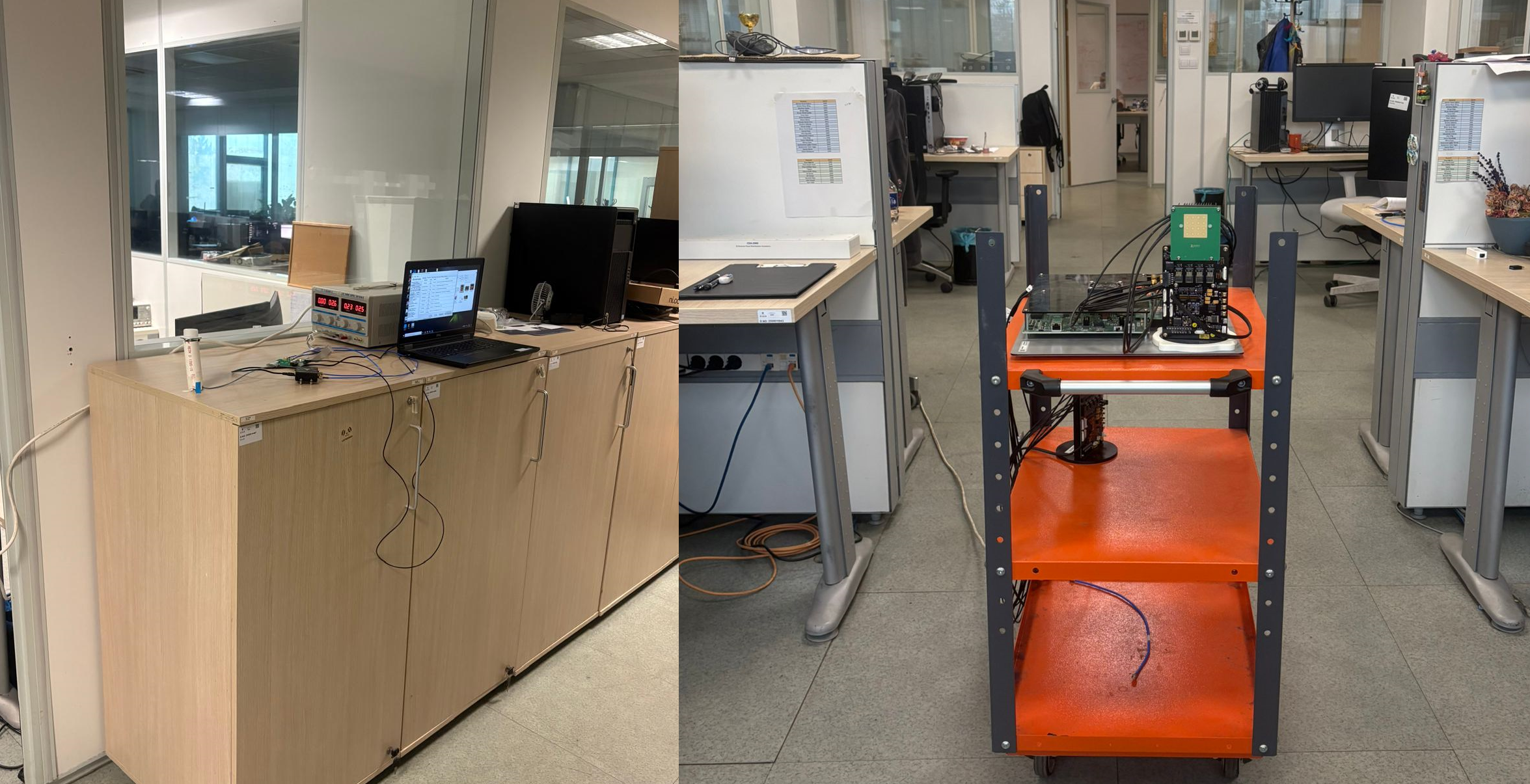} \label{fig:indoorOfficeEnv}}
     \caption{{Experiment environments}}
     \label{anechoic_chamber}
\end{figure*}
\begin{figure*}[h!]
     \centering
\subfigure[Achievable to the maximum array gain for $\gamma =3$ dB]
{\includegraphics[width=3.2in]{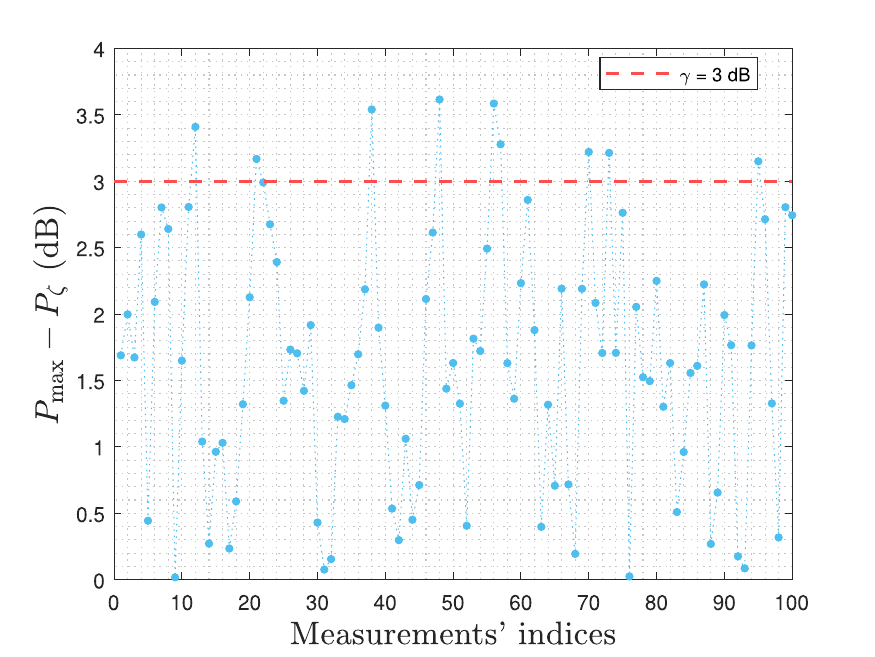} \label{fig:fig1}}
\subfigure[Achievable to the maximum array gain for $\gamma =2$ dB]
{\includegraphics[width=3.2in]{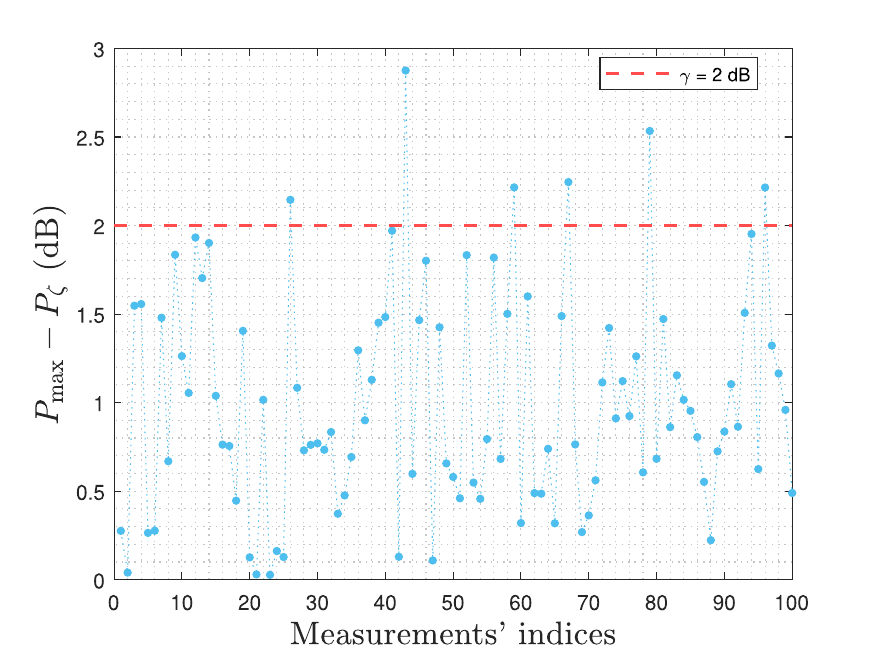} \label{fig:fig2}}
\caption{Achievable vs. Maximum Array Gain: The achievable gain is obtained using the refined codebook \( \zeta_R \), while the maximum gain is determined by scanning the entire initial codebook \( \zeta \).}\label{fig:indexdB}
\vspace{-0.5cm}
\end{figure*}

We use $4 \times 4$ Sivers URA receiver module of EVK02004 \cite{evk02004}. Each of the array's antennas is equipped with a 10-bit phase shifter, allowing for $1024$ possible settings per antenna. The case of ULA is obtained by activating only one row of antennas (four antennas) from the URA. In such a case, the URA and ULA codebook sizes, before refinement, are $1024^{16}$ and $1024^4$, respectively. The transmitter is equipped with one single-antenna. The antennas are arranged with a spacing of 5.15 mm between them. The Sivers URA orientation is random and changes over time. The measurements were taken from various positions, as shown by Fig. \ref{fig:indoorOfficeEnv}. Moreover, barriers with different dimensions, materials, and shapes were are randomly placed between the transmitter and the receiver during the experiment. For instance, the experiments are performed during office hours, when researchers are mobile and randomly located.

Signals are transmitted at a carrier frequency of 25.1\,GHz. The setup consists of a Renesas F5728 frequency multiplier \cite{f5728} and a mixer in conjunction with a signal generator. The signal generator produces two signals at frequencies of 3.1\,GHz and 5.5\,GHz. The 5.5\,GHz signal is supplied to the local oscillator input of the Renesas up-converter, which quadruples the frequency via its internal multiplier to generate a continuous wave at 22\,GHz. The output is then mixed with the 3.1\,GHz signal, resulting in a carrier frequency of 25.1\,GHz. The final signal is transmitted using an isotropic antenna.

A computer manages the entire system, overseeing both measurement collection and phase shifter configuration. In the validation phase that follows the refinement of the codebook, the device is rotated to a randomly selected angle. A beam is then steered in multiple directions according to the steering vectors in the refined codebook. The received power is measured, and the configuration that yields the array gain is determined through channel estimation and MRC. The gap between the maximum array gain and the gain achieved using the refined codebook afterward is measured and used to compute the array gain loss.

\subsection{Results}
\label{subsec:results}

We collected a total of 280 measurement points for validation, with 140 points for the case where \(\gamma = 3\) dB and 140 points for \(\gamma = 2\) dB. For each validation measurement point, we applied the following procedure: the multi-antenna receiver is oriented toward a random direction $\theta$ and the refined codebook $\mathcal{\zeta}_R$ is scanned to identify the steering vector that maximizes the received power $P_{\mathcal \zeta_R,\theta}$. One fifth of the collected measurements pertain the anechoic chamber environment, since the lab/office environment is more generic/realistic. The performance of the proposed approach is assessed by analyzing the gap between the achievable and maximum array gain $P_{\max,\theta}-P_{\mathcal \zeta_R, \theta}$. Maximum received power $P_{\max}$ is obtained through considering the initial and large codebook $\zeta$.

Recall that the analytical solution developed in this paper provides a subset of the angular range that can be covered by a steering vector, whereas the numerical solution offers the exact range. Tab. \ref{TabURA} pertains respectively, the ULA and URA case scenarios. We compare the refined code-book size derived through the theoretical bounds to the one derived numerically. 

\begin{table}[h!]
\centering
\hspace{-0.5cm}
\caption{Size of refined codebook for the case of ULA and ULA for different $\gamma$ values}
\label{tab:SizeY}
\begin{tabular}{|c|c|c|}
\hline
\textbf{$\gamma$ dB}  & \textbf{4-antenna ULA}  &\textbf{16-antenna URA}  \\
\hline
 1  & 11 & 20 \\
 2  & 9 & 17 \\
 3  & 6 & 14\\
 5  & 5 & 10\\
\hline
\end{tabular}\label{TabURA}
\end{table}

Tab. \ref{TabURA} indicates the codebook size obtained through the developed framework for different values of $\gamma$. In comparison with the initial codebbok size that is $1024^4$ for the case of 4-antenna ULA and $1024^{16}$ for the case of 16-antenna URA, there is a substantial reduction in the codebook size.  For example, the size of codebook is 14 when $\gamma$ is set to 3 dB.

The experimental results pertaining the refined beam sweeping are depicted in Figures \ref{fig:indexdB} and \ref{fig:emp cdf}. Fig. \ref{fig:indexdB} illustrates the gap between the maximum array gain and the achievable gain, \( P_{\max} - P_{\zeta} \), at each validation measurement point, where \( \gamma \) is set either to 3\,dB or 2\,dB. The figure demonstrates that the vast majority of test measurements fall within the intended gain range. Notably, for only 9 data points out of 140, the achievable gain falls below the pre-set threshold when $\gamma=3$dB. Since the experimental setup inherently includes noise, it is an integral part of the results, including the estimated array gain. The noisy results explain why some points exceed the threshold. Nonetheless, they remain within 4\,dB of the maximum achievable power. While the refined codebook clearly provides performance close to the maximum array gain, its size of 14 (as shown in Tab. \ref{TabURA}) represents a negligible fraction of the initial codebook size of \(1024^{16}\).

To further illustrate the efficiency of the proposed codebook generation technique and the likelihood of maintaining the gain difference within the intended range, we present the cumulative distribution function (CDF) of the achievable versus maximum gain in Fig. \ref{fig:emp cdf}. The results are shown for various selected values of $\gamma$. It is important to note that the pre-defined threshold results in different codebook sizes. The findings reveal that the constraint on achievable gain is met in over 90\% of cases, even in the presence of noise. The proposed method achieves gains closer to the maximum, even when utilizing a smaller codebook. For instance, with $\gamma$ set to 2\,dB, the codebook is reduced to 14 steering vectors. The refined codebook performance clearly outperforms the hierarchical approach.

\begin{figure}[t]
\centering
\includegraphics[width = 1\linewidth]{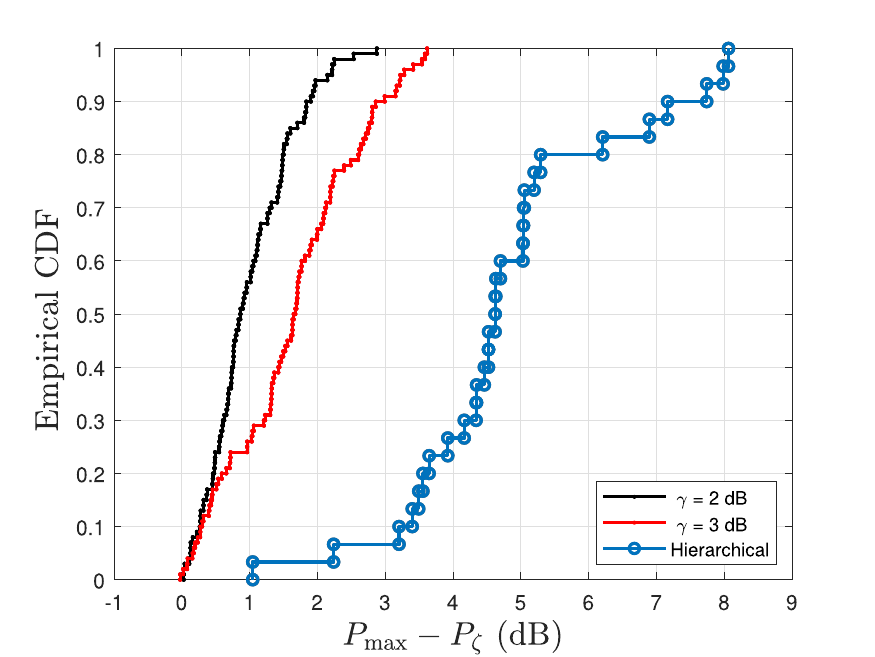}
\caption{Empirical CDF of the achievable to the maximum gain}
\label{fig:emp cdf} \vspace{-0.3cm}
\end{figure} 

\section{Conclusion}
The study conducted in this paper, together with the experimental validation, proves that a few steering vectors can substitute large codebooks, with reasonable degradation compared to the maximum array gain. For instance, one can reduce the codebook size from \(1024^{16}\) to 14 steering vectors while experiencing a maximum loss of 3 dB. Hence, a few configurations are enough to cover the entire visibility range of a URA/ULA while maintaining performance close to the maximum. This work makes no assumptions regarding the availability of location or direction information. Moreover, it eliminates the need for channel estimation, which typically requires a large overhead and may suffer from inaccuracies when the overhead is limited. This work provides evidence that it is possible to achieve an effective trade-off between computational complexity and array gain performance.

While the proposed technique has demonstrated effectiveness in beam steering, additional research is needed to develop a new method for rapid beam alignment, particularly in scenarios involving multiple-antenna transmitters and receivers. Even with a small codebook size at both the transmitter (e.g., \(N_t\)) and the receiver (e.g., \(N_r\)), the beam alignment process still involves searching across \(N_t \times N_r\) combinations, which can be quite extensive. Therefore, our future work will primarily focus on reducing the beam alignment search time. We plan to utilize machine learning techniques to design a cost-efficient beam alignment method.


\bibliographystyle{IEEEtran}
\bibliography{IEEEabrv,TWC1bib}

\end{document}